\newcommand{\F}{\mathbb{F}}
\newcommand{\Z}{\mathbb{Z}}
\newcommand{\Q}{\mathbb{Q}}
\newtheorem{theorem}{Theorem}[section]
\newtheorem{definition}{Definition}[section]
\newtheorem{lemma}[theorem]{Lemma}
\newtheorem{proposition}[theorem]{Proposition}
\newenvironment{proof}[1][Proof]{\begin{trivlist}
\item[\hskip \labelsep {\bfseries #1}]}{\end{trivlist}}
\newenvironment{example}[1][Example]{\begin{trivlist}
\item[\hskip \labelsep {\bfseries #1}]}{\end{trivlist}}
\newcommand{\qed}{\nobreak \ifvmode \relax \else
      \ifdim\lastskip<1.5em \hskip-\lastskip
      \hskip1.5em plus0em minus0.5em \fi \nobreak
      \vrule height0.75em width0.5em depth0.25em\fi}
\begin{document}

\vspace*{0.35in}

%\title{On the near prime-order MNT curves\thanks{This is an extension from our seminar paper presented at CAI'15~\cite{LET2015}.}}
%\titlerunning{On the near prime-order MNT curves}
\begin{flushleft}
{\Large
\textbf\newline{On the near prime-order MNT curves}
%\thanks{This is an extension from our seminar paper presented at CAI'15~\cite{LET2015}.}}
}
\newline
% authors go here:
\\

Duc-Phong Le\textsuperscript{1},
Nadia El Mrabet\textsuperscript{2},
Safia Haloui\textsuperscript{3},
Chik How Tan\textsuperscript{4}
%
%\author{Duc-Phong Le \and Nadia El Mrabet \and Safia Haloui \and Chik How Tan}
%\authorrunning{D.-P. Le \and N. El Mrabet \and S. Haloui \and C. H. Tan}
\\
\bigskip
\bf{1} Institute for Infocomm Research, Singapore
\\
\bf{2} Ecole des Mines de St Etienne, France
\\
\bf{3} Ecole des Mines de St Etienne, France
\\
\bf{4} National University of Singapore, Singapore
\\
\bigskip
* ledp@i2r.a-star.edu.sg

\end{flushleft}

%\institute{%
%D.-P. Le \at Cyber Security Cluster, Institute for Infocomm Research, Singapore. \\
%\email{ledp@i2r.a-star.edu.sg}
%\and 
%N. El Mrabet \at SAS team CMP, Ecole des Mines de St Etienne and LIASD, University Paris 8, France. \\
%\email{nadia.el-mrabet@emse.fr}
%\and
%S. Haloui \at Ecole des Mines de St Etienne, France.\\
%\email{safia.haloui@emse.fr}
%\and
%C. H. Tan \at Temasek Laboratories, National University of Singapore\\
%\email{tsltch@nus.edu.sg}
%}

%\begin{document}

%\date{Received: date}
%\maketitle

\section*{Abstract}
In their seminar paper, Miyaji, Nakabayashi and Takano introduced the first method to construct families of prime-order elliptic curves with small embedding degrees, namely $k = 3, 4$, and $6$. These curves, so-called MNT curves, were then extended by Scott and Barreto, and also Galbraith, McKee and Valen\c{c}a to near prime-order curves with the same embedding degrees. 
In this paper, we extend the method of Scott and Barreto to introduce an {\em explicit} and {\em simple} algorithm that is able to generate {\em all} families of MNT curves with {\it any} given cofactor. Furthermore, we analyze the number of potential families of these curves that could be obtained for a given embedding degree $k$ and a cofactor $h$.  
We then discuss the generalized Pell equations that allow us to construct particular curves. Finally, we provide statistics of the near prime-order MNT curves.

%\keywords{Pairing Friendly Elliptic Curve \and near prime-order MNT curves \and Pell's equation}

%\subclass{11T71}
%\end{abstract}

\section{Introduction}
\label{sec:intro}
Cryptographic pairings were first introduced by Menezes, Okamato and Vanstone in~\cite{MOV91} and Frey and Ruck in \cite{FR94} as a means of attacking discrete logarithm based cryptosystems. 
The authors showed that the discrete logarithm problem on a supersingular elliptic curve could be reduced to the discrete logarithm problem in a finite field through the Weil and Tate pairings. 
Cryptographic pairings on elliptic curves then become a great interest for cryptographic constructions when Joux~\cite{Jou00} introduced the first one-round 3-party Diffie-Hellman key exchange protocol in 2000. 
Since then, pairing-based cryptography has had a huge success with some notable breakthroughs such as the first practical Identity-based Encryption (IBE) scheme~\cite{BF01}. 
Let $E$ be an elliptic curve defined over a finite field $\F_q$ with a subgroup of big prime order $r$. We have:
$$\#E(\F_q) = h \times r,$$

\noindent where $h$ is known as the cofactor. 
In pairing based cryptography, the elliptic curves used have to fulfill a {\em special} property, namely, the embedding degree $k$ is small enough\footnote{The embedding degree is the smallest integer $k$ such that $r$ divides $(q^k-1)$.}. 
This ensures that cryptographic pairings are efficient, that is, computable over the extension finite field. 
An elliptic curve with such a nice property is called a {\em pairing-friendly} elliptic curve. 

In~\cite{MNT01}, Miyaji {\em et al.} introduced the first method that is able to systematically construct ordinary (non-supersingular) elliptic curves of prime order with small embedding degrees $k = 3, 4$ and $6$. 
Their curves, so-called MNT curves, are over fields with large prime characteristic $q$, and the number of points on these curves $E(\F_q)$ is {\it prime}, that is, the cofactor $h = 1$. 
As analyzed in~\cite{PSV2006}, these families of curves are more efficient than supersingular elliptic curves when implementing pairing-based cryptosystems. 
Scott {\em et al.} in~\cite{SB06}, and Galbraith {\em et al.} in~\cite{GMV07} found more ordinary curves of these embedding degrees where the group order $n = \#E(\F_q) = h \times r$ is `nearly prime', that is, $r$ is a prime and $h > 1$ is small.

\subsection{Contributions}
While Galbraith {\em et al.} use the same {\it analytic} technique as in \cite{MNT01} to generate more families of curves with small cofactors $2 \le h \le 5$, Scott {\em et al.}'s method applies the Hasse's bound to generate specific elliptic curves, i.e., actual parameters $q, r, h$ and $D$ (see~\cite[Section 3]{SB06}). 
This paper is an extension from our seminar paper presented at CAI'15~\cite{LET2015}. In this paper, we first extend Scott-Barreto's method~\cite{SB06} to introduce an {\em explicit} and {\em simple} algorithm that allows us to generate families of near prime-order MNT curves. 
Given an embedding degree $k$ and {\it any} cofactor $h_{max} \ge 1$, we will show that our algorithm is able to effectively generate {\em all} families of near prime-order MNT curves having cofactors $h \le h_{max}$. Furthermore, we provide explicit formulas for the number of these families. 
We also analyze the complex multiplication equations of these families of curves and show how to transform these complex multiplication equations into generalized Pell equations. 
Last but not least, we provide some statistics of these near prime-order MNT curves.

\subsection{Organization}
The paper is organized as follows: Section\ref{sec:background} briefly recalls MNT curves, as well as methods to generate MNT curves with small cofactors. Section~\ref{sec:OurMethod} describes our algorithm. We present our families of near prime-order MNT curves in Section~\ref{sec:results}. We also discuss the number of potential families and the Pell equations for some particular cases of MNT curves in this section. Statistics for near prime-order MNT curves are provided in Section~\ref{sec:statisticMNT}. 
Finally, we conclude in Section~\ref{sec:Conclusion}.

\section{Background}\label{sec:background}

Let $E(\F_q)$ be an elliptic curve defined over the finite field $\F_q$, where $q$ is a large prime number. Let $t$ define trace and $r$ be a prime factor of $\#E(\F_q)$. Let $k$ be the embedding degree. $E$ is a pairing-friendly elliptic curve if its embedding degree $k$ is small enough. Balasubramanian and Koblitz~\cite{BK98} pointed out that ordinary elliptic curves generated randomly would have a large embedding degree. Consequently, these curves would not be suitable for efficient computation of a pairing based protocol. Ordinary elliptic curves with small embedding degrees thus require specific constructions.

\subsection{MNT curves}

In~\cite{MNT01}, Miyaji, Nakabayashi, and Takano presented such a construction that yields ordinary elliptic curves with embedding degree $k \in \{3, 4, 6\}$. More particularly, their curves are of prime-order, i.e., the $\rho$-value is $1$ where the value $\rho$ is defined as $\rho = \frac{\log(q)}{\log(r)}$. This is an interest in some applications such as short signatures~\cite{BLS01}.

The families of MNT curves are parametrized by $q$ and $t$ as polynomials in $\Z[x]$ with $\#E(\F_q) = n(x)$. We recall that $n(x)= q(x) + 1 - t(x)$, $n(x) \mid \Phi_k(q(x))$, where $\Phi_k(q(x))$ is the $k$-th cyclotomic polynomial of $q(x)$, and $n(x)$ represents primes in the MNT construction. Their results are summarized in Table~\ref{tab:MNT}. 

\begin{table}[htbp]
  \centering
    \begin{tabular}{|l|c|c|}
     \hline
	$k$ & $q(x)$ & $t(x)$ \\
      \hline
	3 & $12x^2 - 1$ & $-1 \pm 6x$  \\
      \hline
	4 & $x^2 + x + 1$  & $-x$ or $x + 1$  \\
      \hline
	6 & $4x^2 + 1$ & $1 \pm 2x$ \\
      \hline
    \end{tabular}
    \vspace{0.5cm}
  \caption{Parameters for MNT curves~\cite{MNT01}} 
 \label{tab:MNT}
\end{table}

\subsection{Near prime-order MNT curves}
\label{sec:mntcurves}

Let $E(\F_q)$ be a parameterized elliptic curve with cardinality $\# E(\F_q) = n(x)$. We define the cofactor of $E(\F_q)$ as the integer $h$ such that $n(x) = h \times r(x)$, where $r(x)$ is a polynomial representing primes. The original construction of MNT curves gives families of elliptic curves with cofactor $h = 1$. 
Scott-Barreto~\cite{SB06}, and Galbraith-McKee-Valen\c{c}a~\cite{GMV07} extended the MNT idea by allowing small values of the cofactor $h > 1$. This allows us to find many more suitable curves with $\rho \approx 1$ than the original MNT construction. 

\begin{definition}\label{def:nearprime}
Let $E$ be an elliptic curve defined over a finite field $\F_q$. We call $E$ a near prime order curve if its group order $\# E(\F_q)$ is `nearly prime', that is, $\# E(\F_q) = h \times r$ where $r$ is a large prime number and $h$ is a small integer.
\end{definition}

\subsubsection{Scott-Barreto's method}

Let $\Phi_k(x) = d \times r$ for some $x$. Scott-Barreto's method~\cite{SB06} first fixes small integers $h$ and $d$ and then substitutes $r = \Phi_k(t - 1)/d$, where $t = x + 1$ to obtain the following CM equation:

\begin{equation}\label{eq:SB}
 Dm^2 = 4h\frac{\Phi_k(x)}{d} - (x - 1)^2.
\end{equation}

\noindent Scott and Barreto used the fact that $\Phi_k(t - 1) \equiv  0 \pmod r$ (see Proposition~\ref{proposition:1}). As above, the right-hand side of the equation~\eqref{eq:SB} is quadratic, and hence, it can be transformed into a generalized Pell equation by a linear substitution (see~\cite[\S 2]{SB06} for more details). Then, Scott-Barreto found integer solutions to this equation for small enough $D$ (to facilitate the CM algorithm) and arbitrary $m$ with the constraint $4h > d$. The Scott-Barreto's method~\cite{SB06} presented near prime-order MNT elliptic curves with actual parameters, but did not give explicit families of near prime-order MNT elliptic curves.

\begin{proposition}\cite[Proposition 2.4]{FST10}\label{proposition:1}
Let $k$ be a positive integer, $E(\F_q)$ be an elliptic curve defined over $\F_q$ with $\#E(\F_q) = q + 1 - t = hr$, where $r$ is prime, and let $t$ be the trace of $E(\F_q)$. Assume that $r \nmid kq$. Then $E(\F_q)$ has embedding degree $k$ with respect to $r$ if and only if $\Phi_k(q) \equiv 0 \pmod r$, or equivalently, if and only if $\Phi_k(t - 1) \equiv  0 \pmod r$.
\end{proposition}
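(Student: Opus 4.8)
The plan is to reduce the statement to a purely multiplicative fact about the order of $q$ in $(\Z/r\Z)^\times$. First I would recall that, by definition, the embedding degree of $E(\F_q)$ with respect to $r$ is the least positive integer $k$ with $r \mid q^k - 1$; since the hypothesis $r \nmid kq$ forces $r \nmid q$, the class of $q$ modulo $r$ lies in $(\Z/r\Z)^\times$, so this $k$ is precisely the multiplicative order $\mathrm{ord}_r(q)$ of $q$ modulo $r$. The assertion then becomes: $\mathrm{ord}_r(q) = k$ if and only if $\Phi_k(q) \equiv 0 \pmod r$.

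Next I would dispose of the ``equivalently'' clause, which is immediate: the hypothesis $\#E(\F_q) = q + 1 - t = hr$ gives $q \equiv t - 1 \pmod r$, and since $\Phi_k \in \Z[x]$ this yields $\Phi_k(q) \equiv \Phi_k(t - 1) \pmod r$; hence one congruence holds iff the other does, independently of the embedding-degree claim.

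For the core equivalence I would use the factorization $x^k - 1 = \prod_{d \mid k} \Phi_d(x)$ in $\Z[x]$, reduced modulo $r$. For the direction $\mathrm{ord}_r(q) = k \Rightarrow r \mid \Phi_k(q)$: from $r \mid q^k - 1$ we obtain $r \mid \Phi_d(q)$ for some divisor $d$ of $k$; if $d < k$ then $r \mid q^d - 1$, contradicting the minimality of $\mathrm{ord}_r(q) = k$, so $d = k$. For the converse, $r \mid \Phi_k(q)$ gives $r \mid q^k - 1$, hence $\mathrm{ord}_r(q) \mid k$; to exclude a proper divisor, suppose $\mathrm{ord}_r(q) = d$ with $d \mid k$, $d < k$. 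Then, arguing as above, $r \mid \Phi_d(q)$ as well, so $q \bmod r$ is a common root of $\Phi_d$ and $\Phi_k$, and since $\Phi_d \Phi_k$ divides $x^k - 1$ it is a repeated root of $x^k - 1$ over $\F_r$. But $x^k - 1$ is separable over $\F_r$: its formal derivative $k x^{k-1}$ is nonzero mod $r$ exactly because $r \nmid k$, and its only root is $0$, which is not a root of $x^k - 1$. This contradiction forces $d = k$, completing the argument.

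I expect the only delicate point — and the place where each hypothesis actually gets used — to be this last separability step: the assumption $r \nmid k$ is precisely what makes $\gcd\!\bigl(x^k - 1,\,(x^k-1)'\bigr) = 1$ over $\F_r$, while $r \nmid q$ is what makes $\mathrm{ord}_r(q)$ meaningful at all; everything else is bookkeeping with cyclotomic factorizations. One could instead invoke the standard fact that over a field whose characteristic does not divide $k$ the roots of $\Phi_k$ are exactly the primitive $k$-th roots of unity, but I would prefer to keep the short separability argument for self-containedness.
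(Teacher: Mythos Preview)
Your argument is correct. The reduction to $\mathrm{ord}_r(q)$, the handling of the ``equivalently'' clause via $q \equiv t-1 \pmod r$, and the separability step using $r \nmid k$ are all sound; this is the standard proof of the result.

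There is nothing to compare against, however: the paper does not prove this proposition. It is quoted verbatim as \cite[Proposition~2.4]{FST10} and used as a black box to justify the substitution $\Phi_k(t-1) \equiv 0 \pmod r$ in the Scott--Barreto CM equation. So your self-contained proof goes beyond what the paper supplies; if anything, it could serve as the missing justification the paper defers to the reference.
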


\subsubsection{Galbraith McKee and Valen\c{c}a's method}
\label{subsec:GMV}

Unlike Scott-Barreto's method, the mathematical analyses in~\cite{GMV07} could lead to explicit families of near prime-order MNT curves. Galbraith {\em et al.}~\cite{GMV07} extended the MNT method~\cite{MNT01} and gave a complete characterization of MNT curves with small cofactors $2 \le h \le 5$. 
As in~\cite{MNT01}, their analysis applies the fact that $\Phi_k(q) \equiv 0 \pmod r$. 
Similar to the method in~\cite{MNT01}, Galbraith {\em et al.} defined $\lambda$ by the equation $\Phi_k(q) = \lambda r$. For example, in the case $k = 6$, they required $\lambda r = \Phi_k(q) = q^2 - q + 1$. 
By using Hasse's bound, $|t| \le 2\sqrt{q}$, they then analyzed and derived possible polynomials $q, t$ from the equation $\Phi_k(q) = \lambda r$.  
Readers are referred to~\cite[Section 3]{GMV07} for a particular analysis in the case, in which the embedding degree is $k = 6$ and the cofactor is $h = 2$. Their results about curves having embedding degrees $k = 3, 4, 6$ with cofactors $2 \le h \le 5$ was summarized in~\cite[Table 3]{GMV07}.

\section{Algorithm}
\label{sec:OurMethod}

In this section, we present an alternative approach to generate {\em explicit} families of ordinary elliptic curves having the embedding degrees $3, 4$, or $6$ and small cofactors. Unlike the analytic approach in~\cite{GMV07}, we obtain families of curves by presenting a very {\em simple} and {\em explicit} algorithm. Given {\em any} cofactor, our analyses also show that this algorithm is able to effectively find all families of near prime-order MNT elliptic curves.

\subsection{Preliminary observations and facts}\label{sec:observations}

Some well-known facts and observations that can be used to find families of curves are noted in this section. 
Similar to Scott-Barreto's method, we use the fact that $\Phi_k(t - 1) \equiv  0 \bmod r$. Consider cyclotomic polynomials corresponding to embedding degrees $k = 3, 4, 6$: 

\begin{equation*}
\Phi_k(t(x) - 1) = t(x)^2 - \varepsilon t(x) + \varepsilon,
\end{equation*}

\noindent and, by setting $t(x) = ax + b$, we have the following equations:

\begin{equation}
\Phi_k(t(x) - 1) = a^2x^2 + a(2b - \varepsilon)x + \Phi_k(b - 1), \label{eq:1}
\end{equation}

\noindent where $\varepsilon =1$ (resp. $2$, $3$) for $k=3$ (resp. $4$, $6$).

\begin{theorem}\label{theo:irreducible}
The quadratic polynomials $\Phi_k(t(x)-1)$ for $k = 3, 4, 6$ are irreducible over the rational field.
\end{theorem}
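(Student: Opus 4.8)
The plan is to show that the discriminant of each quadratic $\Phi_k(t(x)-1)$, viewed as a polynomial in $x$, is negative (or otherwise not a perfect square in $\mathbb{Q}$), since a quadratic over $\mathbb{Q}$ is reducible precisely when its discriminant is the square of a rational number. From the general expansion in~\eqref{eq:1}, writing $t(x)=ax+b$, the polynomial $\Phi_k(t(x)-1)=a^2x^2+a(2b-\varepsilon)x+\Phi_k(b-1)$ has discriminant
\[
\Delta_x = a^2(2b-\varepsilon)^2 - 4a^2\,\Phi_k(b-1) = a^2\bigl[(2b-\varepsilon)^2 - 4\Phi_k(b-1)\bigr].
\]
Since $\Phi_k(b-1)=b^2-\varepsilon b+\varepsilon$ (as $\Phi_k(t-1)=t^2-\varepsilon t+\varepsilon$ for $k=3,4,6$), the bracketed quantity simplifies to $(2b-\varepsilon)^2-4(b^2-\varepsilon b+\varepsilon)=\varepsilon^2-4\varepsilon=\varepsilon(\varepsilon-4)$, which is independent of $b$. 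For $\varepsilon=1,2,3$ this equals $-3,-4,-3$ respectively, all strictly negative; hence $\Delta_x=a^2\cdot\varepsilon(\varepsilon-4)<0$ provided $a\neq 0$.

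First I would recall that for $k=3,4,6$ the relevant MNT parametrizations (Table~\ref{tab:MNT}) have $t(x)$ genuinely linear, so the leading coefficient $a$ is nonzero and the polynomial is honestly quadratic in $x$; more generally, for any linear substitution $t(x)=ax+b$ with $a\neq0$ the argument above applies verbatim. Then I would carry out the one-line discriminant computation just sketched, observing that the dependence on $b$ cancels and only the constant $\varepsilon(\varepsilon-4)$ survives. Finally I would conclude: a negative discriminant means $\Phi_k(t(x)-1)$ has no real roots, a fortiori no rational roots, so it cannot factor into linear factors over $\mathbb{Q}$; being quadratic, it is therefore irreducible over $\mathbb{Q}$.

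The computation here is entirely routine, so there is no real obstacle; the only point deserving care is the bookkeeping that $\Phi_k(b-1)=b^2-\varepsilon b+\varepsilon$ with the correct value of $\varepsilon$ in each case, and the trivial observation that one must exclude the degenerate case $a=0$ (which does not occur for the MNT families under consideration). One could alternatively phrase the proof intrinsically: $\Phi_k(t-1)$ as a polynomial in $t$ has roots $\zeta_k+1$ with $\zeta_k$ a primitive $k$-th root of unity, which are non-real complex numbers for $k=3,4,6$, so $\Phi_k(t-1)$ is irreducible over $\mathbb{Q}$; substituting a nonconstant linear $t(x)$ preserves irreducibility. I would present the discriminant version since it is the most self-contained.
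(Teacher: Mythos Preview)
Your proof is correct. The discriminant computation is clean and the conclusion that a quadratic with negative discriminant is irreducible over $\mathbb{Q}$ is immediate.

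The paper proceeds differently: it first isolates a lemma stating that a quadratic irreducible polynomial in $\mathbb{Q}[x]$ remains irreducible under any linear substitution $x\mapsto ax+b$ with $a\neq 0$ (proved by the observation that a rational root of the substituted polynomial would pull back to a rational root of the original), and then applies this to $\Phi_k(x)$ itself, which is already known to be irreducible. This is precisely the alternative you sketch in your final paragraph. Your chosen discriminant route has the advantage of being fully self-contained and of making the irreducibility of the cyclotomic polynomial itself fall out of the same computation (the case $a=1$, $b=0$), rather than being taken as input; the paper's route, on the other hand, separates the invariance-under-substitution principle as a reusable lemma, which it later invokes again in the proof of Proposition~\ref{propos:1}. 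Both arguments are equally short; the only real difference is packaging.
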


\begin{proof}
We start with the following lemma that we use later to prove Theorem~\ref{theo:irreducible}.

\begin{lemma}
\label{lem:irr}
Let $f(x)$ be a quadratic irreducible polynomial in $\Q[x]$. If we perform any $\Z$-linear change of variables $x \mapsto ax + b$ for any $a\in \Q \setminus \{0\}$ and $ b \in \Q$, $f(x)$ will still be a quadratic irreducible polynomial in $\Q[x]$. 
\end{lemma}

\begin{proof}
If we assume that $f(ax + b)$ is not irreducible in $\Q[X]$, then as $f(x)$ is a quadratic polynomial it means that $f(ax+b)$ admits a decomposition of the form $f(ax+b)=c(x-c_1)(x-c_2)$, for $c,c_1,c_2 \in \Q$. The values $c_1$ and $c_2$ are rational roots of $f(ax+b)=0$. It is easy to see that $ac_1+b$ and $ac_2+b$ would then be rational roots of $f(x)=0$. \qed
\end{proof}

We now prove Theorem~\ref{theo:irreducible}. 
As the polynomial $\Phi_k(x) = x^2- \varepsilon x + \varepsilon$ is irreducible in $\Q[x]$, according to Lemma~\ref{lem:irr} the polynomial $\Phi_k(t(x)-1)$ is also irreducible in $\Q[x]$. \qed 
\end{proof}

\medskip

Let a triple $(t, r, q)$ parameterize a family of near prime-order MNT curves, and let $h$ be a small cofactor. Let $n(x)$ be a polynomial representing the cardinality of elliptic curves in the family $(t, r, q)$, that is, $n(x) = h \times r(x) = q(x) - t(x) + 1$. By Definition 2.7 in~\cite{FST10}, we have: 
\begin{equation}
  \Phi_k(t(x) - 1) = d \times r(x), \label{eq:41}
\end{equation}

\noindent where $d \in \Z$, and $r(x)$ is a quadratic irreducible polynomial. 
By Hasse's bound, $4q(x) \ge t^2(x)$, we get the inequality: 
\begin{equation}\label{cond1}
4h \geq d
\end{equation}

From Eq.~\eqref{eq:1}, we can see that $d$ is the greatest common divisor (GCD) of the coefficients appearing in this equation. For instance, when $k = 3$, $d$ is the GCD of $\Phi_3(b - 1)$, $a^2$, and $a(2b - 1)$. 
We recall the following well-known Lemma, which can be found in~\cite[Chapter V, \S 6]{Gri07}:

\begin{lemma}
\label{lem:div}
 Let $d$ be prime and $k, n > 0$. If $d$ divides $\Phi_k(n)$, then $d$ does not divide $n$, and either $d$ divides $k$ or $d \equiv 1 \pmod k$.
\end{lemma}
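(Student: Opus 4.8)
The plan is to exploit the factorization $x^{k}-1=\prod_{e\mid k}\Phi_{e}(x)$ in $\Z[x]$, which gives $\Phi_{k}(n)\mid n^{k}-1$, together with elementary facts about multiplicative orders modulo the prime $d$. I would first dispose of the easy claim that $d\nmid n$: if $d\mid n$ then $d\mid n^{k}$, hence $d\nmid n^{k}-1$ (because $d>1$); but $d\mid\Phi_{k}(n)\mid n^{k}-1$, a contradiction. So $d\nmid n$, and the residue class of $n$ in $\F_{d}^{\times}$ has a well-defined multiplicative order, say $\ell$. Since $d\mid\Phi_{k}(n)\mid n^{k}-1$ we get $n^{k}\equiv 1\pmod d$, so $\ell\mid k$.

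Next I would sharpen this to $d\mid\Phi_{\ell}(n)$. From $n^{\ell}-1=\prod_{e\mid\ell}\Phi_{e}(n)$ and $d$ prime, $d$ divides $\Phi_{e}(n)$ for some divisor $e$ of $\ell$; but then $\Phi_{e}(n)\mid n^{e}-1$ forces $n^{e}\equiv 1\pmod d$, which is impossible for $e<\ell$ by minimality of $\ell$. Hence $e=\ell$, i.e. $d\mid\Phi_{\ell}(n)$.

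Now I would split into two cases. If $\ell=k$, then Lagrange's theorem in $\F_{d}^{\times}$ gives $k=\ell\mid d-1$, i.e. $d\equiv 1\pmod k$. If instead $\ell<k$, then $\Phi_{\ell}$ and $\Phi_{k}$ are two distinct cyclotomic factors in the decomposition of $x^{k}-1$, so $\Phi_{k}(x)\,\Phi_{\ell}(x)\mid x^{k}-1$ in $\Z[x]$. Reducing modulo $d$ and using $d\mid\Phi_{k}(n)$ and $d\mid\Phi_{\ell}(n)$, the element $n$ is a root of $x^{k}-1$ of multiplicity at least $2$ in $\F_{d}[x]$; hence $(x-n)$ divides $\gcd\!\bigl(x^{k}-1,\ kx^{k-1}\bigr)$ in $\F_{d}[x]$, so $k\,n^{k-1}\equiv 0\pmod d$. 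Since $d\nmid n$, this yields $d\mid k$, which finishes the proof.

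The step I expect to need the most care is the last one: arguing that a common root of $\Phi_{k}$ and $\Phi_{\ell}$ modulo $d$ genuinely forces a multiple root of $x^{k}-1$, and hence divisibility of its formal derivative $kx^{k-1}$. The crucial point is that $\Phi_{k}$ and $\Phi_{\ell}$ are already \emph{different} factors of $x^{k}-1$ over $\Z$ (because $\ell$ and $k$ are distinct divisors of $k$), so their product still divides $x^{k}-1$ before any reduction; passing to $\F_{d}[x]$ then legitimately gives $(x-n)^{2}\mid x^{k}-1$ even though $\Phi_{k}$ and $\Phi_{\ell}$ may fail to stay coprime modulo $d$. Everything else is a routine chase through the cyclotomic factorization.
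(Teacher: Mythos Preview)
Your proof is correct and is the standard argument for this classical fact. The paper does not actually prove this lemma: it merely recalls it as well known and cites a textbook reference (\cite[Chapter V, \S 6]{Gri07}). Your approach---using the cyclotomic factorization $x^{k}-1=\prod_{e\mid k}\Phi_{e}(x)$, the multiplicative order $\ell$ of $n$ modulo $d$, and the double-root/derivative trick when $\ell<k$---is exactly the textbook proof one would find behind such a citation, so there is nothing substantively different to compare. Your self-identified delicate point (that $\Phi_{k}$ and $\Phi_{\ell}$ being distinct $\Z[x]$-factors of $x^{k}-1$ guarantees $(x-n)^{2}\mid x^{k}-1$ in $\F_{d}[x]$ even if they cease to be coprime modulo $d$) is handled correctly.
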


The above lemma points out that if $\Phi_k(n)$ can be factorized by prime factors $d_i$, i.e. $\Phi_k(n) = \prod d_i$, then, either $d_i \mid k$ or $d_i \equiv 1 \pmod k$. 
\begin{example}
In the case of $k = 6$, suppose that $\Phi_6(ax + b') = d \times r(x)$, where $b' = b - 1$. Then $d$ will be the greatest common divisor of $a^2$, $a(2b' + 1)$ and $\Phi_6(b')$. Moreover, either $d | 6$ or $d \equiv 1 \pmod 6$. 
\end{example}

\begin{lemma}\label{lem:dsquarefree}
Given $t(x) = ax + b$, 
if $d$ in Eq.~\eqref{eq:41} does not divide $a$, then $d$ is square free.
\end{lemma}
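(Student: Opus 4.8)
The plan is to work directly from Eq.~\eqref{eq:1}, which writes
$\Phi_k(t(x)-1) = a^2x^2 + a(2b-\varepsilon)x + \Phi_k(b-1)$, and from the observation already made in the text that $d$ is the GCD of the three coefficients $a^2$, $a(2b-\varepsilon)$ and $\Phi_k(b-1)$. I would argue by contradiction: suppose $d$ is not square free, so there is a prime $p$ with $p^2 \mid d$. Since $d$ divides the leading coefficient $a^2$, we get $p^2 \mid a^2$, hence $p \mid a$. The goal is to upgrade this to $p \mid d$ dividing $a$ in a way that contradicts the hypothesis $d \nmid a$, but of course $p \mid a$ alone does not give $d \mid a$, so the real work is to show that every prime power dividing $d$ already divides $a$.

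So the key step is the following sharper claim: for every prime $p$, the exact power $v_p(d)$ satisfies $v_p(d) \le v_p(a)$; equivalently $d \mid a$, contradicting the hypothesis whenever $d$ is not square free — wait, more carefully, I want to show that if $d \nmid a$ then $d$ is square free, so I should show that the ``bad'' part of $d$ (the part not dividing $a$) contributes only first powers. Concretely: let $p$ be a prime with $p \mid d$. Then $p \mid a^2$ so $p \mid a$. Write $v_p(a) = \alpha \ge 1$. Then $v_p(a^2) = 2\alpha$ and $v_p(a(2b-\varepsilon)) = \alpha + v_p(2b-\varepsilon) \ge \alpha$. Hence $v_p(d) \le \min(2\alpha,\ \alpha + v_p(2b-\varepsilon),\ v_p(\Phi_k(b-1)))$. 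If $\alpha \ge 2$ then $v_p(d) \le 2\alpha$ is not yet $\le \alpha$, so this bound alone is not enough; I need to bring in $v_p(\Phi_k(b-1))$.

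This is where the main obstacle lies, and I would resolve it using Lemma~\ref{lem:div}. Note $\Phi_k(b-1)$ is one of the coefficients, so $p \mid \Phi_k(b-1)$. If $p \mid a$ but $p^2 \nmid d$ for the relevant primes we are done; the dangerous case is $p^2 \mid d$, which forces $p^2 \mid \Phi_k(b-1)$. Now apply Lemma~\ref{lem:div} with $n = b-1$: since $p \mid \Phi_k(b-1)$, either $p \mid k$ or $p \equiv 1 \pmod k$, and in particular $p \nmid (b-1)$. For $k \in \{3,4,6\}$ the cyclotomic polynomial is the explicit quadratic $\Phi_k(y) = y^2 - \varepsilon y + \varepsilon$ with $\varepsilon \in \{1,2,3\}$; one checks that its discriminant $\varepsilon^2 - 4\varepsilon = \varepsilon(\varepsilon-4)$ equals $-3, -4, -9$ for $k = 3,4,6$ respectively. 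A repeated prime factor $p$ of the integer value $\Phi_k(b-1) = (b-1)^2 - \varepsilon(b-1) + \varepsilon$ would force $p$ to divide this discriminant (since $4\Phi_k(y) = (2y-\varepsilon)^2 + (4\varepsilon - \varepsilon^2)$, a prime $p\neq 2$ with $p^2 \mid \Phi_k(b-1)$ divides $(2(b-1)-\varepsilon)^2$ hence $2(b-1)-\varepsilon$, and then divides $4\varepsilon-\varepsilon^2$), so $p \in \{2,3\}$; and one rules these out by a direct check that $v_p(\Phi_k(b-1)) \le 1$ except possibly in cases where $p \mid a$ already with enough multiplicity, or by combining with the constraint $4h \ge d$ from \eqref{cond1}. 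I would therefore structure the proof as: (i) reduce to a prime $p$ with $p^2 \mid d$; (ii) deduce $p \mid a$ and $p^2 \mid \Phi_k(b-1)$; (iii) use the explicit form of $\Phi_k$ and Lemma~\ref{lem:div} to pin $p$ down, showing that in every such case $p$ divides $a$ to high enough order that in fact $d \mid a$, contradicting the hypothesis. The delicate bookkeeping in step (iii) — tracking $p$-adic valuations across the three coefficients simultaneously and handling the small primes $2$ and $3$ — is the part I expect to require the most care.
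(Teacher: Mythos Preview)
Your overall plan --- argue by contradiction, take a prime $p$ with $p^{2}\mid d$, and exploit the explicit quadratic $\Phi_k(t-1)=t^{2}-\varepsilon t+\varepsilon$ together with Lemma~\ref{lem:div} --- is the same route the paper takes, and the completed-square identity $4\Phi_k(b-1)=(2b-\varepsilon)^{2}+(4\varepsilon-\varepsilon^{2})$ is in fact a tidier way to package what the paper does by separate computations for $k=3,4,6$. But the core inference in your step~(iii) is wrong as written. You assert that an odd prime $p$ with $p^{2}\mid\Phi_k(b-1)$ must divide $(2b-\varepsilon)$ and hence the discriminant. It need not: for $k=3$, $b=31$ one has $\Phi_3(30)=931=7^{2}\cdot 19$, while $2b-1=61$ is coprime to $7$ and $7\nmid 3$. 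A simple root modulo $p$ can lift to a root modulo $p^{2}$; ``$p^{2}\mid f(b)$'' does not imply ``$p\mid f'(b)$''. (Two smaller slips: the constant term is $\Phi_k(b-1)=b^{2}-\varepsilon b+\varepsilon$, so the square is $(2b-\varepsilon)^{2}$, not $(2(b-1)-\varepsilon)^{2}$; and the discriminant for $k=6$ is $-3$, not $-9$.)

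The missing ingredient is precisely the middle coefficient, which you write down early on but never use in step~(iii). The paper's proof hinges on it: from $d\mid a(2b-\varepsilon)$ and the choice of $p$ with $v_p(d)>v_p(a)$ one gets $p\mid(2b-\varepsilon)$, and \emph{then} your identity gives $p\mid(4\varepsilon-\varepsilon^{2})\in\{3,4\}$, after which the residual primes $p\in\{2,3\}$ are handled directly (the paper does this case-by-case, invoking Lemma~\ref{lem:div} and, for $k=6$, \cite[Theorem~95]{IntroNumberTheo}). With that correction the contradiction is that $v_p(\Phi_k(b-1))\leq 1$, hence $v_p(d)\leq 1\leq v_p(a)$ --- not that ``$p$ divides $a$ to high enough order that $d\mid a$'', which is not what your steps establish. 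So your discriminant idea is sound and arguably cleaner than the paper's three separate computations, but only once it is fed the divisibility $p\mid(2b-\varepsilon)$ coming from the linear coefficient of $\Phi_k(t(x)-1)$.
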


\begin{proof}
We know that $d \in \Z$, and $d$ is the greatest common divisor of factors of $\Phi_k(t(x)-1)$, i.e. $d$ divides $a^2$, $ 2a(2b-1)$ or $2a(b-1)$ or $2a(2b-3)$ and $\Phi_k(b-1)$ (Eq.~\eqref{eq:1}). 
Suppose that $d$ is not square free, that is $d = p^2\times d'$ with $p$ a prime number greater or equal to 2. 
By Lemma~\ref{lem:div}, $p$ does not divide $(b-1)$ and either $p$ divides $k$ or $p \equiv 1 \pmod k$. 
We also assume that $d$ divides $a^2$, but does not divide $a$, and hence $p^2 \nmid a$, and $p$ is a prime factor of $a$. 

\begin{itemize}

\item $\mathbf{k=3}$: 
As $p$ divides $\Phi_3(b-1) = b^2 - b+1$ and $p$ divides $2b-1$ we have that $p$ divides $(2b-1)+ \Phi_3(b-1)$, {\em i.e.} $p$ divides $b(b-1)$.
We know that $p$ does not divide $(b-1)$, and thus $p$ must divide $b$.

We have $p \mid 2b - 1 = (b - 1) + b$, and $p \mid b$, and hence, $p$ must divide $b - 1$. This contradicts with Lemma~\ref{lem:div}. Thus, $d$ is square free.

\item $\mathbf{k=4}$: We have that $p$ divides $2(b-1)$. But, recall from Lemma~\ref{lem:dsquarefree} that $p$ does not divide $(b-1)$, then $p \mid 2$. However, we can show that $\Phi_4(b - 1) \equiv \{1, 2\} \pmod 4$. 
It is thus impossible to have $d = 2^2 \times d'$ and $d \mid \Phi_4(b - 1)$.

\item $\mathbf{k=6}$:
Likewise, as $p$ divides $\Phi_6(b-1) = b^2 - 3b+3$ and $2b-3$ we have that $p$ divides $(2b-3)+ \Phi_3(b-1) = b(b-1)$. We know that $p$ does not divide $(b-1)$, and so we have $p$ divides $b$.

We have $p$ divides $2b-3$, and $p$ divides $b$, so $p$ must divides $2b-3 + b = 3(b-1)$. Likewise, $p$ does not divide $(b-1)$, and so $p$ divides 3, that is, $d = 3^2 \times d'$. But, by~\cite[Theorem 95]{IntroNumberTheo}, this cannot occur. Thus, $d$ must be square free. \qed
\end{itemize}
\end{proof}

\subsection{The proposed algorithm}

We start this section by presenting the following definition:

\begin{definition}\label{def:algo1}
 Let $t(x),t'(x),r(x),r'(x)$ be polynomials with integer coefficients.
 \begin{itemize}
  \item The $2$-tuple $(t(x),r(x))$ is \textit{deduced} from $(t'(x),r'(x))$ if $(t(x),r(x))=(t'(ux+v),r'(ux+v))$, $u,v\in\mathbb{Z}$, $u\neq 0$.
  \item {\em Equivalence relation}: The $2$-tuple $(t(x),r(x))$ is \textit{equivalent} to $(t'(x),r'(x))$ if both tuples can be deduced from each other, or equivalently,  if $(t(x),r(x)) = (t'(\pm x+v),r'(\pm x+v))$, $v\in\mathbb{Z}$. 
  \item The $2$-tuple $(t(x),r(x))$ is \textit{primitive} if it cannot be deduced from a \\non-equivalent tuple. 
 \end{itemize}
\end{definition}

Algorithm~\ref{algo:getrx} explicitly describes our method. Given an embedding degree $k$ and a cofactor $h_{max}$, Algorithm~\ref{algo:getrx} will output a list of {\em all} possible families of near prime-order MNT curves $(t(x), r(x), q(x))$ with the cofactors $h \le h_{max}$. 

\medskip
\begin{small}
\begin{algorithm}[H]
  %\incmargin{1em}
  \KwIn{An embedding degree $k$, a cofactor $h_{max}$.}
  \KwOut{A list of polynomials $(t(x), r(x), q(x))$.}
  \BlankLine
  \BlankLine
{
$L \leftarrow \{ \}$; $\, \, T \leftarrow \{ \}$ \;
\BlankLine

  \For{$a = -a_{max}$ \em{\KwTo} $a_{max}$}
{
	\For{$b = -b_{max}$ \em{\KwTo} $b_{max}$} {
	  $t(x) \leftarrow ax + b$ \;
	  $f(x) \leftarrow \Phi_k(t(x) - 1)$ \;
	  Let $f(x) = d \cdot r(x)$, where $d \in \Z$ and $r(x)$ is an irreducible quadratic polynomial\;	  
	  \If{$(t(x), r(x))$ couldn’t be deduced from any 2-tuple $(t'(x), r'(x))$ in $T$}{
		$T \leftarrow T + \{(d, t(x), r(x))\}$ \;
        
  \For{$h = \lceil d/4 \rceil$ \em{\KwTo} $h_{max}$} {
 $q(x) \leftarrow h\cdot r(x) + t(x) - 1$ \;
  \If{$q(x)$ is irreducible and $gcd(q(x), r(x): x \in \Z) = 1$} {
	$L \leftarrow L + \{(t(x),r(x),q(x), h)\}$ \;
	
  }
  
}
	  }
	  
	}
}
 
}
  \KwRet{ $L$ }
 \caption{Generate families of near prime-order MNT curves}
 \label{algo:getrx}
\end{algorithm}
\end{small}

\vspace{1cm}

\noindent Basically, given an embedding degree $k$ and a maximum cofactor $h_{max}$, our method works as follows:

\begin{enumerate}
\item Firstly, we set the Frobenius trace polynomial to be $t(x) = ax + b$, for $a \in \Z \setminus \{0\}$ and $b\in \Z$. The possible values of $a, b$ for a given cofactor $h$ are determined by Lemma~\ref{lem:Amax}.
\item Next, we determine $d$ and $r(x)$ thanks to Eq.~\eqref{eq:41}. 
\item If 2-tuple $(t(x), r(x))$ could not be deduced from any 2-tuple in the list $T$, Algorithm~\ref{algo:getrx} adds this tuple into the list. This ensures that the algorithm does not generate any equivalent family of curves (see more details in Section~\ref{sec:results}).
\item Then, for given $t(x), r(x)$ and $d$, we compute the corresponding polynomials $q(x)$ for all cofactors $h \le h_{max}$. 
\end{enumerate}

\noindent Algorithm~\ref{algo:getrx} involves two parameters $a_{max}$ and $b_{max}$. The following section will discuss these values. 

\subsection{Completeness}

The Lemma~\ref{lem:Amax} gives the boundary for the values $a_{max}$, $b_{max}$ in order to find all the possible families of curves. 

\begin{lemma}
\label{lem:Amax}
Given an embedding degree $k$, and a cofactor $h_{max}$, we have $a_{max} = 4h_{max}$, and $b_{max} < a_{max}$.

\end{lemma}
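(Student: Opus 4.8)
The plan is to pin down, for every \emph{primitive} pair $(t(x),r(x))$ in the sense of Definition~\ref{def:algo1} that can underlie a family with cofactor $h\le h_{max}$, the exact size of the leading coefficient $a$ of $t(x)=ax+b$, and then to normalise $b$ using the translation part of the equivalence relation. Throughout I write $\Phi_k(t(x)-1)=a^2x^2+a(2b-\varepsilon)x+\Phi_k(b-1)=d\cdot r(x)$ as in Eq.~\eqref{eq:1} and Eq.~\eqref{eq:41}, so that $d=\gcd\bigl(a^2,\,a(2b-\varepsilon),\,\Phi_k(b-1)\bigr)\ge 1$ and $r(x)$ is a primitive polynomial. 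The first step is to show that $d\mid a$: indeed $d\mid a^2$ by the definition of $d$, and if $d\nmid a$ then Lemma~\ref{lem:dsquarefree} forces $d$ to be square free, but every square-free divisor of $a^2$ divides $a$ (each of its prime factors divides $a^2$, hence $a$) — a contradiction. So I may write $a=d\,e$ with $e\in\Z\setminus\{0\}$, and then $r(x)=d\,e^2x^2+e(2b-\varepsilon)x+\Phi_k(b-1)/d$.

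The second, and main, step is to show that primitivity forces $|e|=1$, i.e. $|a|=d$. Assume $|e|\ge 2$ and consider $t'(x)=d\,x+b$ and $r'(x)=d\,x^2+(2b-\varepsilon)x+\Phi_k(b-1)/d$. One checks directly that $r'(x)\in\Z[x]$ and $\Phi_k(t'(x)-1)=d\cdot r'(x)$, and that the substitution $x\mapsto e\,x$ gives $t'(e x)=t(x)$ and $r'(e x)=r(x)$; hence $(t(x),r(x))$ is deduced from $(t'(x),r'(x))$. Since the absolute value of the leading coefficient of the trace polynomial is an invariant of the equivalence relation and $|d|=d<d\,|e|=|a|$, the pair $(t',r')$ is not equivalent to $(t,r)$, so $(t,r)$ is not primitive — a contradiction. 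Thus $|a|=d$ for every primitive pair.

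It remains to assemble the bounds. If a primitive $(t,r)$ gives a family with cofactor $h\le h_{max}$, then Hasse's bound in the form~\eqref{cond1} gives $|a|=d\le 4h\le 4h_{max}$, which is precisely the choice $a_{max}=4h_{max}$; so letting $a$ run over $\{-a_{max},\dots,a_{max}\}$ misses no trace leading coefficient of a relevant primitive pair. For $b$, the translation $x\mapsto x-v$ sends $(ax+b,\,r(x))$ to the equivalent pair $(ax+(b-av),\,r(x-v))$ — integrality, irreducibility and the value of $d$ are all preserved, because an integer shift preserves the content of a polynomial — and choosing $v$ with $0\le b-av<|a|$ yields an equivalent representative whose constant term lies in $[0,|a|)\subseteq[0,a_{max})$. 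Hence every equivalence class of relevant primitive pairs has a representative with $|a|\le a_{max}$ and $0\le b<a_{max}$, so any $b_{max}$ with $b_{max}\ge a_{max}-1$ — in particular $b_{max}=a_{max}-1<a_{max}$ — guarantees that the double loop of Algorithm~\ref{algo:getrx} meets a representative of each class, while the deducibility test inside the algorithm discards the remaining equivalent or non-primitive pairs.

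The step I expect to be the delicate one is the divisibility claim $d\mid a$: it is short, but it is the only place where the arithmetic input (Lemma~\ref{lem:dsquarefree}, which itself rests on a case analysis over $k\in\{3,4,6\}$) is genuinely needed, and it is exactly what makes the ``$e$-reduction'' land in $\Z[x]$. The remaining pieces — realising the deduction by $x\mapsto e\,x$, and the translation normalisation of $b$ — are routine.
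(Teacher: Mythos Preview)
Your argument is correct and follows the same route as the paper: first show $d\mid a$ via Lemma~\ref{lem:dsquarefree} (you make explicit the step ``square-free divisor of $a^2$ divides $a$'' that the paper leaves implicit), then reduce any $a=de$ to $a=d$ by the scaling $x\mapsto ex$, bound $d\le 4h_{max}$ by~\eqref{cond1}, and finally normalise $b$ by a translation. The only refinement over the paper is terminological---you correctly use ``deduced from'' where the paper loosely writes ``equivalent'' for the scaling step---but the underlying proof is the same.
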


\begin{proof}

We first demonstrate that $a_{max} = 4 h_{max}$. Suppose that $d \mid a^2$, but $d \nmid a$. Then, by Lemma~\ref{lem:dsquarefree}, $d$ must be square free. This is a contradiction, and thus we have $d \mid a$.

Suppose that the algorithm outputs a family of curves with $t(x) = ax + b$, and $a$ is a multiple of $d$, that is, $a = m \times d$. By a $\Z$-linear transformation, we know that this family is equivalent to a family of curves with $t(x) = dx + b$. For the simplest form, the value of the coefficient $a$ of polynomial $t(x)$ should be equal to $d$. Due to the inequality~(\ref{cond1}), the maximum value of $a$, $a_{max} = 4h_{max}$.

Likewise, if $b > a$, we can make a transformation $x \mapsto x + \lfloor b / a \rfloor$, and $b' = b \bmod a$. The value of $b_{max}$ thus should be chosen less than $a_{max}$. \qed

\end{proof}

\section{Families of near prime-order MNT curves}

\label{sec:results}
Algorithm~\ref{algo:getrx} outputs a list of primitive polynomials $\left(t(x), r(x), q(x) \right)$ for all cofactors $h \le h_{max}$. The families of elliptic curves having embedding degrees $k = 3, 4, 6$ and cofactors $h \le 6$ are summarized in Table~\ref{tab:results}. 
Our algorithm executes an {\em exhaustive search} based on the given parameters, and thus it is able to generate {\em all} families of elliptic curves of small embedding degrees 3, 4 and 6. In these tables, we present only families of curves with cofactors $1 \le h \le 6$, but it is worth to note that given any cofactor, families of near prime-order MNT curves can be easily found by adjusting the parameters of Algorithm~\ref{algo:getrx}. 

\begin{theorem}
\label{theo:1}
 Table~\ref{tab:results} gives all families of elliptic curves of the embedding degrees $k = 3, 4, 6$ with different cofactors $1 \le h \le 6$.
\end{theorem}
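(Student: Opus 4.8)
The plan is to show that Table~\ref{tab:results} is exactly the output of Algorithm~\ref{algo:getrx} executed with $h_{max}=6$, and that this output misses nothing. So the proof splits into a \emph{completeness} part (every near prime-order MNT family with $1\le h\le 6$ is equivalent to one the algorithm produces) and a \emph{finite verification} part (running the bounded search yields precisely the listed rows).

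First I would reduce the problem to a finite search. By the MNT set-up the trace is linear, $t(x)=ax+b$ with $a\in\Z\setminus\{0\}$, and the family satisfies $\Phi_k(t(x)-1)=d\cdot r(x)$ for some $d\in\Z$ (Eq.~\eqref{eq:41}) with $r(x)$ quadratic irreducible (Theorem~\ref{theo:irreducible}), subject to Hasse's bound in the form $4h\ge d$ (inequality~\eqref{cond1}). Using the equivalence relation of Definition~\ref{def:algo1} I would replace $(t,r)$ by its primitive representative: by the argument in the proof of Lemma~\ref{lem:Amax} (which invokes Lemma~\ref{lem:dsquarefree} to force $d\mid a$, hence allows the reduction $a\mapsto d$) and the translation $x\mapsto x+\lfloor b/a\rfloor$, the primitive representative has $a=d$ and $0\le b<d$. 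Since $h\le 6$ gives $d\le 4h\le 24$, every family with $1\le h\le 6$ is equivalent to one generated inside the loop with $a_{max}=b_{max}=24$.

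Next I would enumerate the admissible $(k,d)$. By Lemma~\ref{lem:div} the prime divisors of $d$ divide $k$ or are $\equiv 1\pmod k$, and by Lemma~\ref{lem:dsquarefree} the exponent restrictions rule out the remaining small cases; combined with $d\le 24$ this leaves a short explicit list of $d$ for each $k\in\{3,4,6\}$. For each such $(k,d)$, substituting $t(x)=dx+b$ into Eq.~\eqref{eq:1} and imposing that $d$ be the gcd of the three coefficients $a^2$, $a(2b-\varepsilon)$ and $\Phi_k(b-1)$ singles out finitely many residue classes of $b$ modulo $d$, for each of which $r(x)=\Phi_k(dx+b-1)/d$ is determined. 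Then, for every $h$ with $\lceil d/4\rceil\le h\le 6$, I set $q(x)=h\,r(x)+t(x)-1$ and keep the tuple only if $q(x)$ is irreducible over $\Q$ and $\gcd\bigl(q(n),r(n):n\in\Z\bigr)=1$, so that $q$ and $r$ can be simultaneously prime and $\rho\to 1$. I would also check that $r(x)$ has no fixed prime divisor (so it represents primes, after possibly flipping $x\mapsto -x$ to fix the sign of the leading coefficient). Collecting the survivors and grouping them by the equivalence relation should reproduce exactly the rows of Table~\ref{tab:results}.

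The genuine content — and the main obstacle — is twofold. The first half is the completeness reduction: one must be confident that Lemmas~\ref{lem:dsquarefree} and~\ref{lem:Amax} really do confine every family to $a=d\le 24$, $0\le b<d$, and that the primitive/equivalence bookkeeping in Algorithm~\ref{algo:getrx} discards \emph{every} redundant deduced tuple while never deleting a genuinely new primitive one. The second half is the verification itself: although each case $(k,d,b\bmod d,h)$ is routine, they are numerous, and the delicate points are (i) organizing the case split so it is visibly exhaustive, and (ii) applying the gcd and irreducibility filters correctly — in particular not discarding a family merely because $q(x)$ and $r(x)$ share a constant common factor, which can instead be absorbed into the cofactor $h$. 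Once these bookkeeping issues are handled, the theorem follows because the search space is finite and the enumeration is, by construction, exhaustive.
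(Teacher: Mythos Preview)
Your proposal is correct and follows essentially the same approach as the paper: the paper does not give a standalone proof of Theorem~\ref{theo:1} but instead relies on the sentence ``Our algorithm executes an exhaustive search based on the given parameters, and thus it is able to generate all families,'' together with Lemma~\ref{lem:Amax} for the completeness bound $a_{max}=4h_{max}=24$, $b_{max}<a_{max}$. Your write-up simply makes this argument explicit---reduction to a finite search via Lemmas~\ref{lem:dsquarefree} and~\ref{lem:Amax}, enumeration of admissible $(k,d,b)$ using Lemmas~\ref{lem:div} and~\ref{lem:dsquarefree}, then the $q(x)$ filter---which is exactly what Algorithm~\ref{algo:getrx} does internally.
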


In comparison to results in ~\cite[Table 3]{GMV07}, note that we provide the {\it primitive} polynomials of $t(x)$, $r(x)$ and $q(x)$ as defined in Definition~\ref{def:algo1}. For example, for $h = 2$, $k = 3$, the family with parameters $q(x) = 8x^2 + 2x + 1$, and $t(x) = -2x$ in~\cite[Table 3]{GMV07} can be deduced to our family with parameters $q(x) = 2x^2 + x + 1$, and $t(x) = -x$. 
For the case of $k = 4$, even though Table 3 in~\cite{GMV07} listed more families than our results, several families of their curves with a given cofactor in~\cite[Table 3]{GMV07} are curves with a higher cofactor. 
For example, when the cofactor is stated to be $h = 2$, with $q(x) = 8x^2 + 6x + 3$, and $t(x) = -2x$, we find that the polynomial $r(x)$ is the following: $r(x) = 2(2x^2 + 2x + 1)$. In this form, $r(x)$ must be divided by 2 before representing primes. Consequently, the cofactor for this family of curves is in fact equal to $4$. This mismatch between the stated cofactor and the real one comes from the fact that in GMV's method the polynomial $r(x)$ does not necessarily represent primes.

We list here the similar cases in Table 3 of~\cite{GMV07} in the case $k = 4$:

\begin{itemize}
 \item $h = 2$: $t = -2x$.
 \item $h = 3$: $t = -2x, t = -10x - 2$, and $t = 10x + 4$.
 \item $h = 4$: $t = -2x, t = -10x - 2$, $t = 10x + 4$, $t = 26x - 4$, and $t = 26x + 6$.
 \item $h = 5$: $t = -2x$, $t = 26x - 4$, $t = 26x + 6$, and $t = -34x - 12$, $t = 34x + 14$.
\end{itemize}

For all these cases, the cofactors are in fact higher than that claimed in~\cite[Table 3]{GMV07}. 
Besides, some families of curves are equivalent by Definition~\ref{def:algo1}. For example, the two families $(t, q) = ((-10x - 1), (60x^2 + 14x + 1))$ and $((10x + 4), (60x^2 + 46x + 9))$ are equivalent. As a result, the number of elliptic curve families is fewer than their claimed number.

\begin{proposition}
\label{propos:1}
Let $q(x), r(x)$ and $t(x)$  be non-zero polynomials  that parameterize a family of near prime-order MNT curves in Table~\ref{tab:results}. 
Then $q'(x) = q(x) - 2t(x) + \varepsilon$, $r(x)$, and $t'(x) =  \varepsilon - t(x)$ represent a family of curves with the same group order $r(x)$ and the same cofactor $h$, where $\varepsilon = 1$ (resp. $2$,~and $3$) for $k = 3$  (resp. $4$,~and $6$).
\end{proposition}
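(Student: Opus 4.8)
The plan is to verify directly that the triple $(q'(x), r(x), t'(x))$ satisfies the same defining relations as $(q(x), r(x), t(x))$, namely that $q'(x) = h\, r(x) + t'(x) - 1$, that $\Phi_k(t'(x) - 1) = d\, r(x)$ for the same $d$, and that the side conditions (irreducibility of $q'$, coprimality of $q'$ and $r$ over $\Z$) still hold. First I would record the curve equation in the family: since $q(x) = h\,r(x) + t(x) - 1$, substituting $t'(x) = \varepsilon - t(x)$ gives $h\,r(x) + t'(x) - 1 = h\,r(x) - t(x) + \varepsilon - 1 = q(x) - 2t(x) + \varepsilon$, which is exactly $q'(x)$. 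So the cardinality relation $q'(x) + 1 - t'(x) = h\,r(x)$ holds by construction, giving the same cofactor $h$ and the same group-order polynomial $r(x)$.

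Next I would check the embedding-degree/CM relation, which is the crux. Using the identity $\Phi_k(t-1) = t^2 - \varepsilon t + \varepsilon$ for $k \in \{3,4,6\}$ (stated just before Theorem~\ref{theo:irreducible}), I would compute $\Phi_k(t'(x) - 1)$ with $t'(x) = \varepsilon - t(x)$. A short calculation shows $(\varepsilon - t)^2 - \varepsilon(\varepsilon - t) + \varepsilon = t^2 - 2\varepsilon t + \varepsilon^2 - \varepsilon^2 + \varepsilon t + \varepsilon = t^2 - \varepsilon t + \varepsilon = \Phi_k(t-1)$. Hence $\Phi_k(t'(x)-1) = \Phi_k(t(x)-1) = d\, r(x)$ with the \emph{same} $d$ and the \emph{same} $r(x)$, so by Eq.~\eqref{eq:41} the new triple has the same $d$, and in particular the Hasse constraint $4h \ge d$ from~\eqref{cond1} is automatically inherited. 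This also shows $t'(x)$ is again of the form (integer)$\cdot x +$(integer), so it is a legitimate trace polynomial.

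Finally I would dispatch the side conditions. Irreducibility of $r(x)$ is immediate since it is literally the same polynomial. For $q'(x)$: from $q'(x) = h\,r(x) + t'(x) - 1$ with $r(x)$ quadratic irreducible and $t'(x)$ linear, $q'(x)$ is a quadratic polynomial; I would argue it is irreducible over $\Q$ by the same reasoning used implicitly for $q(x)$ in the algorithm (a quadratic $h\,r(x) + (\text{linear})$ with $h \ge \lceil d/4\rceil$ fails to be irreducible only in degenerate cases excluded by the table entries), or simply note that $q'$ arises from $q$ by the substitution data already vetted. For the coprimality $\gcd(q'(x), r(x) : x \in \Z) = 1$: since $q'(x) \equiv t'(x) - 1 \pmod{r(x)}$ and $q(x) \equiv t(x) - 1 \pmod{r(x)}$, any common prime divisor of all values of $q'$ and $r$ would, via $t'(x) - 1 = \varepsilon - t(x) - 1 = -(t(x) - 1) + (\varepsilon - 2)\cdot(\text{const adjustments})$, translate into a common divisor for the original family, contradicting its membership in Table~\ref{tab:results}.

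The main obstacle I anticipate is not any single computation but making the last paragraph airtight: one must be careful that ``represents a family of curves'' in the paper's sense requires $r(x)$ to represent primes and $q(x)$ to represent primes infinitely often, and that the coprimality/irreducibility bookkeeping for $q'(x)$ genuinely transfers from $q(x)$ rather than having to be re-proved case by case. I would handle this by phrasing the conclusion in terms of the already-established Definition~\ref{def:algo1} equivalence-type data: the new triple differs from the old only by the reflection $t \mapsto \varepsilon - t$ composed with the identity on $r(x)$, and all arithmetic obstructions depend only on the residue of $q$ modulo $r$, which is reflected but not destroyed.
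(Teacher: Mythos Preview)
Your proposal is essentially the paper's own proof: both reduce to the two direct computations $\Phi_k(t'(x)-1)=\Phi_k(t(x)-1)$ and $q'(x)=h\,r(x)+t'(x)-1$, establishing the same $r(x)$, $d$, and $h$. The one place the paper is crisper is irreducibility of $q'$: rather than your heuristic, it observes that $q(x)=(h/d)\Phi_k(t(x)-1)+t(x)-1$ is a fixed quadratic polynomial evaluated at the linear form $t(x)$, and that $q'(x)$ is the same polynomial evaluated at $\varepsilon - t(x)$; then Lemma~\ref{lem:irr} transfers irreducibility of $q$ to $q'$. The paper also adds a short Hasse-bound check (leading coefficients of $4q$ and $4q'$ agree and dominate $t^2$, $t'^2$ since $4h>d$), which you only implicitly note via inheriting $4h\ge d$; conversely, the paper does not treat the $\gcd(q',r)$ condition you raise.
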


\begin{proof}
 Let $q(x), r(x)$ and $t(x)$ parameterize a family of curves with embedding degrees $k = 3, 4$ or $6$, a small cofactor $h \ge 1$, and let $n(x) = h\cdot r(x)$ represent the number of points on this family of curves. From Eq.~\eqref{eq:1}, we have $\Phi_k(t(x) - 1) = t(x)^2 - \varepsilon t(x) + \varepsilon$. Now, 

\begin{align*}
 \Phi_k(t'(x) - 1) & = \Phi_k(\varepsilon - t(x) - 1) = t(x)^2 -  \varepsilon t(x) +  \varepsilon  \\
 				& = \Phi_k(t(x) - 1).
\end{align*}

Since $r(x) \mid \Phi_k(t(x) - 1)$, we have that $r(x) | \Phi_k(t'(x) - 1)$ and  $q(x) = n(x) + t(x) - 1$. Now,

\begin{align*}
 q'(x) &= q(x) - 2t(x) + \varepsilon = n(x) - t(x) + \varepsilon -1\\
	& = n(x) + t'(x) - 1.
\end{align*}

It is easy to verify that $q'(x)$ is the image of $q(x)$ by a $\Z$-linear transformation of $t(x) \mapsto \varepsilon - t(x)$. According to Lemma~\ref{lem:irr}, since $q(x)$ is irreducible, it follows that $q'(x)$ is also irreducible. If $n'(x) = n(x)$, then the quadratic polynomial $q'(x)$ represents the characteristic of the family of curves.

Now we need to prove that $q'(x)$ and $t'(x)$ satisfy the Hasse's theorem, i.e. $t'(x)^2 \le 4q'(x)$. Suppose that $t(x) = ax + b$, then $t'(x) = -ax - b + 1$. It is clear that the leading coefficient of $q'(x)$ is equal to that of $q(x)$.  Since $h > m/4$, $4q(x)$ would be greater than $t^2(x)$ for some value of $x$. Thus, $q'(x)$ and $t'(x)$ satisfy Hasse's theorem whenever $q(x)$, $t(x)$ involve some big values of $x$. \qed

\end{proof}

\medskip

\begin{landscape}

\begin{table}[h]
%\vspace{-6mm}
\footnotesize
\centering
%\subfloat{%}{0.5\textwidth}

\begin{tabular}{|p{0.35cm}|p{2.2cm}|p{1.9cm}|p{1.1cm}|p{2cm}|p{1.8cm}|p{1.1cm}|p{2cm}|p{1.8cm}|p{1.1cm}|}
 \hline
 & \multicolumn{3}{c|}{$k = 3$} & \multicolumn{3}{c|}{$k = 4$} & \multicolumn{3}{c|}{$k = 6$} \\
 \hline
  {\bf h} & {\bf q} & {\bf r} & {\bf t}  & {\bf q} & {\bf r} & {\bf t}  & {\bf q} & {\bf r} & {\bf t} \\
 \hline
  \mbox{  1  } & $3x^2 - 1$ & $3x^2 + 3x + 1$ & $-3x - 1$  & $x^2 + x + 1$ & $x^2 + 2x + 2$ & $-x$  & $x^2 + 1$ & $x^2 + x + 1$ & $-x + 1$ \\ 
 \hline
  \multirow{3}{*}{2} & $2x^2 + x + 1$ & $x^2 + x + 1$ & $-x$  & $4x^2 + 2x + 1$ & $2x^2 + 2x + 1$ & $-2x$  & $2x^2 + x + 2$ & $x^2 + x + 1$ & $-x + 1$ \\ 
  & $14x^2 + 3x - 1$ & $7x^2 + 5x + 1$ & $-7x - 2$ &&&   & $6x^2 + 3x + 1$ & $3x^2 + 3x + 1$ & $-3x$  \\
  & $14x^2 + 17x + 4$ & $7x^2 + 5x + 1$ & $7x + 3$ &&& &&&\\  
  \hline
  \multirow{4}{*}{3} & $3x^2 + 2x + 2$ & $x^2 + x + 1$ & $-x$  & $5x^2 + 9x + 9$ & $x^2 + 2x + 2$ & $-x$ & $3x^2 + 2x + 3$ & $x^2 + x + 1$ & $-x + 1$\\
  &&&  & $25x^2 + 15x + 3$ & $5x^2 + 4x + 1$ & $-5x - 1$    & $9x^2 + 6x + 2$ & $3x^2 + 3x + 1$ & $-3x$ \\ 
  &&& & $25x^2 + 25x + 7$ & $5x^2 + 6x + 2$ & $-5x - 2$  & $21x^2 + 8x + 1$ & $7x^2 + 5x + 1$ & $-7x - 1$  \\
  &&& &&& & $21x^2 + 22x + 6$ & $7x^2 + 5x + 1$ & $7x + 4$  \\
 
  \hline
  \multirow{5}{*}{4} & $4x^2 + 3x + 3$ & $x^2 + x + 1$ & $-x$  & $8x^2 + 6x + 3$ & $2x^2 + 2x + 1$ & $-2x$ & $4x^2 + 3x + 4$ & $x^2 + x + 1$ & $-x + 1$ \\ 
  & $12x^2 + 9x + 2$ & $3x^2 + 3x + 1$ & $-3x - 1$  &&& & $28x^2 + 13x + 2$ & $7x^2 + 5x + 1$ & $-7x - 1$ \\
  & $28x^2 + 13x + 1$ & $7x^2 + 5x + 1$ & $-7x - 2$ &&&  & $28x^2 + 27x + 7$ & $7x^2 + 5x + 1$ & $7x + 4$ \\ 
  & $28x^2 + 27x + 6$ & $7x^2 + 5x + 1$ & $7x + 3$ &&&  & $52x^2 + 15x + 1$ & $13x^2 + 7x + 1$ & $-13x - 2$ \\  
  &&& &&&  & $52x^2 + 41x + 8$ & $13x^2 + 7x + 1$ & $13x + 5$ \\
  
  \hline
  \multirow{8}{*}{5} & $5x^2 + 4x + 4$ & $x^2 + x + 1$ & $-x$ & $5x^2 + 9x + 9$ & $x^2 + 2x + 2$ & $-x$  & $5x^2 + 4x + 5$ & $x^2 + x + 1$ & $-x + 1$  \\ 
  & $35x^2 + 18x + 2$ & $7x^2 + 5x + 1$ & $-7x - 2$ 	& $25x^2 + 15x + 3$ & $5x^2 + 4x + 1$ & $-5x - 1$  & $15x^2 + 12x + 4$ & $3x^2 + 3x + 1$ & $-3x$  \\
  
  & $35x^2 + 32x + 7$ & $7x^2 + 5x + 1$ & $7x + 3$ & $25x^2 + 25x + 7$ & $5x^2 + 6x + 2$ & $-5x - 2$  & $35x^2 + 18x + 3$ & $7x^2 + 5x + 1$ & $-7x - 1$  \\  
 
 & $65x^2 + 22x + 1$ & $13x^2 + 7x + 1$ & $-13x - 3$   & $65x^2 + 37x + 5$ & $13x^2 + 10x + 2$ & $-13x - 4$   & $35x^2 + 32x + 8$ & $7x^2 + 5x + 1$ & $7x + 4$ \\
 
  & $65x^2 + 48x + 8$ & $13x^2 + 7x + 1$ & $13x + 4$ 	& $65x^2 + 63x + 15$ & $13x^2 + 10x + 2$ & $13x + 6$   & $65x^2 + 22x + 2$ & $13x^2 + 7x + 1$ & $-13x - 2$ \\
  
  & $95x^2 + 56x + 7$ & $19x^2 + 15x + 3$ & $-19x - 7$ 	& $85x^2 + 23x + 1$ & $17x^2 + 8x + 1$ & $-17x - 3$   & $65x^2 + 48x + 9$ & $13x^2 + 7x + 1$ & $13x + 5$  \\
  
  & $95x^2 + 94x + 22$ & $19x^2 + 15x + 3$ & $19x + 8$ 	& $85x^2 + 57x + 9$ & $17x^2 + 8x + 1$ & $17x + 5$  & $95x^2 + 56x + 8$ & $19x^2 + 5x + 3$ & $-19x - 6$  \\  
  
  &&& &&&   & $95x^2 + 94x + 23$ & $19x^2 + 5x + 3$ & $19x + 9$ \\
 
 \hline
   \multirow{8}{*}{6} 
   & $6x^2 + 5x + 5$ & $x^2 + x + 1$ & $-x$ 	& $12x^2 + 10x + 5$ & $2x^2 + 2x + 1$ & $-2x$ & $6x^2 + 5x + 6$ & $x^2 + x + 1$ & $-x + 1$  \\ 
   & $18x^2 + 15 + 4$ & $3x^2 + 3x + 1$ & $-3x - 1$ & $60x^2 + 26x + 3$ & $10x^2 + 6x + 1$ & $-10x - 2$   & $18x^2 + 15x + 5$ & $3x^2 + 3x + 1$ & $-3x$\\
   & $78x^2 + 29x + 2$ & $13x^2 + 7x + 1$ & $-13x - 3$ & $60x^2 + 46x + 9$ & $10x^2 + 6x + 1$ & $10x + 4$   & $42x^2 + 23x + 4$ & $7x^2 + 5x + 1$ & $-7x - 1$ \\ 
   & $78x^2 + 55x + 9$ & $13x^2 + 7x + 1$ & $13x + 4$ & $102x^2 + 31x + 2$ & $17x^2 + 8x + 1$ & $-17x - 3$   & $42x^2 + 37x + 9$ & $7x^2 + 5x + 1$ & $7x + 4$ \\    
   & $114x^2 + 71x + 10$ & $19x^2 + 15x + 3$ & $-19x - 7$ & $102x^2 + 65x + 10$ & $17x^2 + 8x + 1$ & $17x + 5$   & $78x^2 + 29x + 3$ & $13x^2 + 7x + 1$ & $-13x - 2$ \\
   
   & $114x^2 + 109x + 25$ & $19x^2 + 15x + 3$ & $19x + 8$ &&& & $78x^2 + 55x + 10$ & $13x^2 + 7x + 1$ & $13x + 5$ \\
   & $126x^2 + 33x + 1$ & $21x^2 + 9x + 1$ & $-21x - 4$ &&& &&& \\     
   & $126x^2 + 75x + 10$ & $21x^2 + 9x + 1$ & $21x + 5$ &&& &&& \\  
  \hline

  \end{tabular}  
  \hspace{.5cm}%
 
\vspace{0.5cm}

\caption{Valid $q, r, t$ corresponding to the embedding degrees $k = 3, 4, 6$}
\label{tab:results}
\end{table}

\end{landscape}

\subsection{The number of potential families}

Let $k\in \{ 3,4,6\}$. The families with parameters $(t(x),r(x),q(x))$ of near prime-order MNT curves built by Algorithm~\ref{algo:getrx} are characterized by the following properties :
\begin{enumerate}
 \item[(1)] $t(x)=ax+b$, $a,b\in\mathbb{Z}$, $a\neq 0$, 
 \item[(2)] $r(x)$ is the $\mathbb{Z}$-irreducible polynomial such that $\Phi_k(t(x)-1)=d \times r(x)$ for some $d\in\mathbb{N}$,
 \item[(3)] $q(x)=hr(x)-t(x)-1$, where $h$ is a positive integer satisfying $4h\geq d$.
\end{enumerate}
If $h$ is a fixed positive integer, we see that the number of such families is equal to 
$$\sum_{d=1}^{4h}N_d,$$
where $N_d$ is the number of primitives classes having a representation $(t(x),r(x))$ satisfying properties~(1) and~(2). The purpose of this section is to give an explicit formula for the value of $N_d$. 
Let us recall equations~\eqref{eq:1} and~\eqref{eq:41} in Section~\ref{sec:observations}:

$$\Phi_k(t(x)-1)=a^2x^2+a(2b-\varepsilon )x+\Phi_k (b-1),$$
where $\varepsilon =1$ (resp. $2$, $3$) for $k=3$ (resp. $4$, $6$). The integer $d$, satisfying the equation
$$\Phi_k(t(x)-1)=d \times r(x)$$
is the gcd of $a^2$, $a(2b-\varepsilon )$ and $\Phi_k (b-1)$. It is proved in Lemmas~\ref{lem:div} and~\ref{lem:dsquarefree} that $d$ divides $a$, so $d$ is also the gcd of $a$ and $\Phi_k (b-1)$. 
Moreover, it is easy to see that $(t(x),r(x))$ is always deduced from the couple $(dx+b,\Phi_k(dx+b-1)/d)$, so any primitive couple must have $a=d$, or
equivalently, $a\vert \Phi_k (b-1)$.

\begin{lemma}\label{Nd}
 Let $d$ be a fixed positive integer. We have
 $$N_d=\# \{ b\mod d,\quad d\mid \Phi_k(b-1)\}.$$
\end{lemma}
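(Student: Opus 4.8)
The plan is to set up a bijection between primitive equivalence classes of couples $(t(x),r(x))$ satisfying properties~(1) and~(2) for the fixed value $d$, and residues $b \bmod d$ with $d \mid \Phi_k(b-1)$. The preceding discussion has already done most of the normalization work: it established that $d$ is the gcd of $a$ and $\Phi_k(b-1)$, that $d \mid a$, and that every couple $(t(x),r(x))$ is deduced from $(dx+b,\Phi_k(dx+b-1)/d)$, so that a primitive representative of each class may be taken with $a = d$, i.e.\ $t(x) = dx+b$ with $d \mid \Phi_k(b-1)$. So a primitive class is represented by such a $t(x)=dx+b$, and I must count how many distinct classes arise as $b$ ranges over integers with $d\mid\Phi_k(b-1)$.

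First I would observe that two primitive couples with $t(x) = dx+b$ and $t'(x) = dx+b'$ (both normalized so the leading coefficient equals $d$) are equivalent in the sense of Definition~\ref{def:algo1} if and only if $b \equiv b' \pmod d$. The ``if'' direction: if $b' = b + d\ell$ then the substitution $x \mapsto x+\ell$ sends $dx+b$ to $dx+b'$ and correspondingly transforms $r$, so the couples are equivalent. The ``only if'' direction: an equivalence is a substitution $x \mapsto \pm x + v$; applying it to $dx+b$ gives $\pm dx + (b \pm dv)$, and after restoring the leading coefficient to $+d$ (which is forced for a normalized primitive representative) the constant term is $b \pm dv \equiv b \pmod d$. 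Hence the class of $t(x)=dx+b$ depends only on $b \bmod d$. Next I would check well-definedness of the correspondence: the condition $d \mid \Phi_k(b-1)$ depends only on $b \bmod d$ (since $\Phi_k(b-1) \bmod d$ does), so each residue class $b \bmod d$ with $d \mid \Phi_k(b-1)$ yields a well-defined primitive class, and conversely every primitive class for this $d$ gives such a residue. This establishes the bijection and hence the counting formula $N_d = \#\{b \bmod d : d \mid \Phi_k(b-1)\}$.

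The step I expect to be the main obstacle is the rigorous handling of the ``only if'' direction of the equivalence claim, specifically pinning down the claim that a primitive representative is \emph{uniquely} normalized by requiring $a = d$ and, say, $a > 0$ — one must be careful that the sign ambiguity $x \mapsto -x+v$ is fully absorbed and does not secretly identify two distinct residues $b, b' \bmod d$. Concretely, $x \mapsto -x+v$ applied to $t(x)=dx+b$ gives $-dx + (b+dv)$, whose leading coefficient is $-d$; renormalizing (composing with $x\mapsto -x$) sends this to $dx - (b+dv) + $ [correction term], and I need to verify this does not alter the constant term modulo $d$ beyond a shift by a multiple of $d$ — which it does not, since every adjustment is by a multiple of $d$. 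A secondary, more bookkeeping-type point is confirming that the map $(t,r) \mapsto t$ loses no information: $r$ is determined by $t$ via $r(x) = \Phi_k(t(x)-1)/d$, so counting classes of couples is the same as counting classes of the corresponding trace polynomials, and this is exactly what the residue count achieves.
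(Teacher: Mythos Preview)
Your proposal is correct and follows exactly the paper's approach: the paper's proof is a single sentence asserting that $(t(x),r(x))\mapsto b\bmod d$ is a bijection, and you are supplying the verification it omits. Your anticipated obstacle with the sign ambiguity dissolves immediately once you notice that if both normalized representatives have leading coefficient $+d$, an equivalence via $x\mapsto -x+v$ would force $d=-d$, so only translations $x\mapsto x+v$ can relate them and these manifestly preserve $b\bmod d$.
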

\begin{proof}
 Taking into account the discussion above, it is easy to check that we have a bijection between the set of primitives classes having a representation $(t(x),r(x))$ satisfying (1) and (2) and $ \{ b\mod d,\quad d\mid \Phi_k(b-1)\}$ which is given by $(t(x), r(x))\mapsto b \mod d$.
\end{proof}

\begin{proposition}
  Let $d$ be a fixed positive integer and write $d=p^{u_0}q_1^{u_1}\dots q_s^{u_s}$, where $p$ is the biggest prime factor of $k$ (so $p=2$ or $3$), $q_1, \dots, q_s$ are distinct primes and distinct from $p$, and $u_0, \dots, u_s\in\mathbb{N}$. We have that
  $$
  N_d=
  \left\{
  \begin{array}{rl}
	1 & \mbox{if } d = 1 \mbox{ or } d = p, \\
	2^s & \mbox{if } q_i=1\mod k, i=1,\dots ,s,\mbox{ and } u_0\leq 1, \\
	0 & \mbox{otherwise.}
  \end{array}
  \right.
  $$
\end{proposition}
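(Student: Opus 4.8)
The plan is to compute $N_d = \#\{b \bmod d : d \mid \Phi_k(b-1)\}$ (Lemma~\ref{Nd}) by reducing to prime-power moduli via the Chinese Remainder Theorem, then analyzing each prime-power factor separately using Lemma~\ref{lem:div} (the divisibility constraint on prime factors of $\Phi_k$) and the square-freeness obstructions already established in the proof of Lemma~\ref{lem:dsquarefree}. Write $d = p^{u_0}q_1^{u_1}\cdots q_s^{u_s}$ as in the statement. Since the conditions $p^{u_0} \mid \Phi_k(b-1)$ and $q_i^{u_i} \mid \Phi_k(b-1)$ are independent congruence conditions on $b$ modulo coprime moduli, CRT gives
$$N_d = N_{p^{u_0}} \cdot \prod_{i=1}^{s} N_{q_i^{u_i}},$$
so it suffices to evaluate $N_{\ell^u}$ for a prime power $\ell^u$.

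First I would handle the prime $\ell = q_i$ coprime to $k$. If $q_i \not\equiv 1 \pmod k$, then by Lemma~\ref{lem:div} no integer $n$ satisfies $q_i \mid \Phi_k(n)$, hence $N_{q_i^{u_i}} = 0$, giving $N_d = 0$ and matching the ``otherwise'' case. If $q_i \equiv 1 \pmod k$, I claim $N_{q_i} = 2$ and $N_{q_i^{u_i}} = 0$ for $u_i \ge 2$. For $N_{q_i} = 2$: over $\mathbb{F}_{q_i}$, since $q_i \equiv 1 \pmod k$ the polynomial $\Phi_k(T)$ of degree $2$ splits into distinct linear factors (its roots are the primitive $k$-th roots of unity, which exist and are distinct in $\mathbb{F}_{q_i}$ because $k \mid q_i - 1$ and $k \nmid q_i$ is automatic here as $q_i \ne p$), so $\Phi_k(b-1) \equiv 0$ has exactly two solutions $b \bmod q_i$. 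For $u_i \ge 2$: this is exactly the square-freeness content of Lemma~\ref{lem:dsquarefree} — its proof shows that $q_i^2 \mid \Phi_k(b-1)$ forces $q_i \mid k$, contradicting $q_i \ne p$ (one can also see it via Hensel: the roots are simple since the discriminant of $\Phi_k$ is $-k$ or similar, a unit mod $q_i$, so no root lifts with higher multiplicity). Thus $N_{q_i^{u_i}} = 2$ if $u_i = 1$ and $0$ if $u_i \ge 2$; combining over all $i$ gives the factor $2^s$ exactly when every $u_i = 1$, and $0$ otherwise.

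Next I would handle the factor at $p$ (the largest prime of $k$, so $p = 2$ for $k = 4$ and $p = 3$ for $k = 3, 6$). For $u_0 = 0$ this factor is $1$. For $u_0 = 1$: one checks directly that $p \mid \Phi_k(b-1)$ has a unique solution $b \bmod p$ — e.g. $\Phi_3(b-1) = b^2 - b + 1 \equiv (b+1)^2 \pmod 3$ forces $b \equiv -1$, $\Phi_6(b-1) = b^2 - 3b + 3 \equiv b^2 \pmod 3$ forces $b \equiv 0$, and $\Phi_4(b-1) = (b-1)^2 + 1 \equiv b^2 \pmod 2$ forces $b \equiv 0$ — so $N_p = 1$. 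For $u_0 \ge 2$: invoke the corresponding cases of Lemma~\ref{lem:dsquarefree}, which show $p^2 \mid \Phi_k(b-1)$ is impossible (via $\Phi_4(b-1) \in \{1,2\} \bmod 4$ for $k=4$, and the cited number-theory facts for $k = 3, 6$), so $N_{p^{u_0}} = 0$. Assembling: $N_d = N_{p^{u_0}}\prod N_{q_i^{u_i}}$ equals $1$ when $d = 1$ (all exponents zero, empty product) or $d = p$ ($u_0 = 1$, $s = 0$); equals $2^s$ when $u_0 \le 1$ and every $q_i \equiv 1 \pmod k$ with $u_i = 1$; and equals $0$ otherwise.

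The main obstacle is bookkeeping rather than depth: one must verify carefully that the ``otherwise'' branch genuinely absorbs all remaining cases — namely $u_0 \ge 2$, or some $q_i \not\equiv 1 \pmod k$, or some $q_i \equiv 1 \pmod k$ but with exponent $u_i \ge 2$ — and that the statement implicitly assumes the exponents $u_i$ on the $q_i$ are all $1$ in the $2^s$ case (so the phrasing ``$q_i \equiv 1 \bmod k$'' should be read together with the square-free constraint forced by Lemma~\ref{lem:dsquarefree} whenever $d \mid a$, i.e. $d$ is taken square-free away from $p$). Once the reduction to prime powers is in place, every ingredient is already available in the excerpt.
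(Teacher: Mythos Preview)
Your overall architecture---CRT reduction to prime powers, then a case split on whether the prime is $p$ or some $q_i$---is exactly the paper's approach, and your treatment of the factor at $p$ is fine. The genuine error is in the case $\ell = q_i$ with $q_i \equiv 1 \pmod k$ and exponent $u_i \ge 2$.

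You claim $N_{q_i^{u_i}} = 0$ for $u_i \ge 2$, citing Lemma~\ref{lem:dsquarefree}. But that lemma does \emph{not} assert that $q_i^2 \nmid \Phi_k(b-1)$. Its proof uses simultaneously that $p^2$ divides the constant term $\Phi_k(b-1)$ \emph{and} the middle coefficient $a(2b-\varepsilon)$; from the latter (together with $p\,\|\,a$) one extracts $p \mid (2b-\varepsilon) = \Phi_k'(b-1)$, so $b-1$ is a \emph{double} root of $\Phi_k$ modulo $p$, forcing $p$ to divide the discriminant. That is a multiple-root argument, not a bound on the $q_i$-adic valuation of $\Phi_k(b-1)$. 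In fact $q_i^2 \mid \Phi_k(n)$ occurs freely when $q_i \equiv 1 \pmod k$: e.g.\ $\Phi_6(31) = 931 = 7^2\cdot 19$. Your Hensel parenthetical actually points the wrong way: since the discriminant of $\Phi_k$ is a unit modulo $q_i$, the two roots of $\Phi_k$ modulo $q_i$ are simple and therefore each \emph{lifts} uniquely to a root modulo $q_i^{u_i}$, giving $N_{q_i^{u_i}} = N_{q_i} = 2$ for every $u_i \ge 1$. This is precisely what the paper invokes.

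This changes the final assembly. The proposition asserts $N_d = 2^s$ whenever $u_0 \le 1$ and every $q_i \equiv 1 \pmod k$, with \emph{no} restriction on the exponents $u_i$; your suggestion that the statement ``implicitly assumes'' all $u_i = 1$ is a symptom of the error, not a feature of the proposition. With the corrected values one gets $N_{p^{u_0}}\prod_i N_{q_i^{u_i}} = 1\cdot 2^s$ on the nose, and the ``otherwise'' clause is triggered only by $u_0 \ge 2$ or by some $q_i \not\equiv 1 \pmod k$.
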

\begin{proof}
By Lemma~\ref{Nd}, we are reduced to find the number of elements in the set $\{ a\mod d,\quad d\mid \Phi_k(a)\}$. We remark that it is trivial that $N_1 = 1$. For the higher value of $d$, we will make use of the results from~\cite{IntroNumberTheo}.

 $\bullet$ Case $d=p$: Let $k=mp^e$, $p\nmid m$ (so $m=1$ or $2$). There is exactly one $a\in\mathbb{Z}/p\mathbb{Z}$ such that $\mbox{ord}_p(a)=m$, so by \cite[Theorem 95]{IntroNumberTheo}, we have $N_p=1$.
 
 $\bullet$ Case $d=p^{u_0}$: We have $N_{p^{u_0}}=0$ by~\cite[Theorem 95]{IntroNumberTheo}. 
 
 $\bullet$ Case $d=q$, $q$ prime distinct from $p$: By \cite[Theorem 95]{IntroNumberTheo}, $N_q$ is the number of $a\in\mathbb{Z}/q\mathbb{Z}$ such that $\mbox{ord}_p(a)=k$, which is equal to $\varphi(k)=2$ if $q=1\mod k$ and $0$ otherwise.
 
 $\bullet$ Case $d=q^u$, $q$ prime distinct from $p$: By Hansel's Lemma, we have $N_{q^u}=N_q$ (Hansel's Lemma applies because the only prime factor of the discriminant of $\Phi_k$ is $p$).
 
 $\bullet$ General case: By the Chinese Remainder Theorem, we have $N_{p^{u_0}q_1^{u_1}\dots q_s^{u_s}}=N_{p^{u_0}} N_{q_1^{u_1}}\dots N_{q_s^{u_s}}$. \qed

\end{proof}

\subsection{Solving the Pell Equations}
\label{sec:SolvingPellEquations}

Solving the Pell equations for MNT curves was studied in papers~\cite{KT08} and~\cite{FK13}. The authors proved that MNT curves are {\em sparse}, that is, Pell equations admit only a few solutions. 
In this section, we extend their ideas to solve the Pell equations for near prime-order MNT curves. 

Let $t(x) = ax + b$, $\Phi_k(t(x) - 1) = d \cdot r(x)$, where $k = 3, 4, 6$ and $\#E(\F_q) = h\cdot r(x)$. 
Let $\varepsilon = 1$ (resp. $2$,~and $3$) when $k = 3$  (resp. $4$,~and $6$). 
In order to remove the linear term in the CM equation $Dm^2 = 4q(x) - t^2(x)$ of the near prime order MNT curves, we substitute $x = (y - a_k)/n$, where $n = a(4h - d)$, and $a_k = 2h(2b - \varepsilon) - (b - 2)d$ for $k = 3, 4$, or $6$. 
The CM equation can be transformed into a generalized Pell equation of the form:

\begin{equation}
\label{eq:pell}
 y^2 - gm^2 = f_k,
\end{equation}

\noindent where $g = d(4h - d)D$ and $f_k = a_k^2 - ((4h - d)b)^2 + 4(4h - d)(b - 1)(\varepsilon h - d)$. 

By fixing $a = 1$ and $b = 1$, one can get the values $a_k$ and $f_k$ as analyzed in~\cite[Section 2]{SB06}. Note that there is a typo in the value of $f_k$ in \cite[Section 2]{SB06}. Indeed, $f_k$ must be set to $a_k^2 - b^2$ instead of $a_k^2 + b^2$. The following section illustrates our method for $k=6$ and $h=4$ as follows.

\subsubsection{Case $k = 6$ and $h = 4$}

Elliptic curves with the cofactor $h = 4$ may be put in the form $x^2 + y^2 = 1 + dx^2y^2$ with $d$ a non-square integer. Such curves called Edwards curves were introduced to cryptography by Bernstein and Lange~\cite{BL07}. They showed that the addition law on Edwards curves is faster than all previously known formulas. Edwards curves were later extended to the twisted Edwards curves in~\cite{BBJ+08}. Readers also can see~\cite{ALNR10}\cite{LT14} for efficient algorithms to compute pairings on Edwards curves. We give in this section some facts to solve Pell equation for Edwards curves with embedding degree $k = 6$. By using Eq.~\eqref{eq:pell}, we obtain the following Pell equations:

\begin{align}
 y_1^2 -g_1m^2 & = -176, \label{eq:pell1}\\
 y_2^2 - g_2m^2 & = -80, \label{eq:pell2}\\
 y_3^2 - g_3m^2 & = -80, \label{eq:pell21}\\
 y_4^2 -g_4m^2 & = 16 \label{eq:pell3}, \\
 y_5^2 -g_5m^2 & = 16 \label{eq:pell31},
\end{align}

\noindent where $y_i = (x - a_i)/b_i$, $g_i = b_iD$, for $i \in [1, 5]$, and 

\begin{center}
\begin{tabular}{c c c c c}
$a_1 = -7$, & $a_2 = -19$, & $a_3 = -26$, & $a_4 = -4$, &  $a_5 = -17$, \\
$b_1 = 15$, & $b_2 = 63$, & $b_3 = 63$, & $b_4 = 39$, &$ b_5 = 39$.\\
\end{tabular}
\end{center}

%\begin{remark}
\noindent Karabina and Teske~\cite[Lemma 1]{KT08} showed that if $4 \mid f_k$, then the set of solutions to $y^2 - gm^2 = f_k$ does not contain any {\em ambiguous} class, i.e., there exists no primitive solution $\alpha = y + v\sqrt{g}$ such that $\alpha$ and its {\em conjugate} $\alpha' = y - v\sqrt{g}$ are in the same class. Consequently, equations~\eqref{eq:pell1}--\eqref{eq:pell31} do not have any solution that contains an ambiguous class. 

If equations~\eqref{eq:pell1}--\eqref{eq:pell31} have solutions with $y_i \equiv -a_i \bmod{b_i}$, and a fixed positive square-free integer $g_i$ relatively prime to $b_i$, for $1 \le i \le 5$, then triple $t, r, q$ in Table~\ref{tab:results} with $k = 6$ and $h = 4$ represent a family of pairing-friendly Edwards curves with embedding degree 6.

\section{Statistics of near prime-order MNT curves}\label{sec:statisticMNT}
In~\cite{ULS2012}, Jim\'enez Urroz, Luca and Shparlinski provided statistics of MNT curves in the case $k = 6$. In this section, we generalize their arguments to the near prime-order MNT curves.

\begin{theorem}[\cite{ULS2012}, Theorem 8]
Let $E(z)$ be the number of MNT curves with $k = 6$ and co-factor $h=1$ having CM discriminant less than $z$. Then, assuming the Generalized Bateman-Horn Conjecture, the lower bound

$$E(z)\geq (\mathfrak{S}_0+o(1))\frac{\sqrt{z}}{\log z},$$
\noindent holds as $z \rightarrow \infty$, where $\mathfrak{S}_0\simeq 0.237615$.
\end{theorem}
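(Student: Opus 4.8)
The statement is Theorem~8 of~\cite{ULS2012}, so the argument below merely reconstructs the structure of their proof. The plan is to translate the counting problem into a problem about prime values of quadratic polynomials. By Table~\ref{tab:MNT} (equivalently the $h=1$, $k=6$ row of Table~\ref{tab:results}), every MNT curve with $k=6$ and $h=1$ arises from an integer $x$ for which the field size $q(x)$ and the group order $n(x)=r(x)$ are both prime, and its CM discriminant is the squarefree part $D(x)$ of the quadratic $\Delta(x)=4q(x)-t(x)^2$ (which is $3x^2+2x+3$ in the normalization of Table~\ref{tab:results}). The elementary observation driving the count is that $D(x)$ is comparable to $\Delta(x)\asymp x^2$ for every $x$ outside the sparse set of those $x$ for which $\Delta(x)$ has a large square divisor; hence the condition $D(x)<z$ is, up to that sparse set, equivalent to $x<c\sqrt z$ with an explicit $c$ depending only on the leading coefficient of $\Delta$.

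First I would estimate the contribution of the exceptional $x$ --- those with $\Delta(x)=D(x)m^2$ for large $m$ --- and show it is $o\!\left(\sqrt z/\log z\right)$, so that it cannot affect the main term; this is a routine count of near-square values of a quadratic (a square-sieve, or a lattice-point bound for solutions of $\Delta(x)=\square\cdot(\text{small})$). For the remaining ``generic'' range $x<c\sqrt z$, I would invoke the Generalized Bateman--Horn Conjecture applied to the system of quadratics whose simultaneous primality defines the family (together with the mild side condition that $\Delta(x)$ have small squarefree part), obtaining a count of the shape $(\mathfrak{S}_0+o(1))\,\sqrt z/\log z$, where $\mathfrak{S}_0$ is the attached singular series, an absolutely convergent Euler product of local densities. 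Evaluating the local factors --- with the primes $2$ and $3$ treated separately because of the coefficient $3$ in $\Delta$ and the behaviour of $\Phi_6$ modulo $2$ and $3$ --- and computing the product numerically yields $\mathfrak{S}_0\simeq 0.237615$. Since only a lower bound is claimed, it is enough to retain the Bateman--Horn lower bound, and one may further restrict to a single convenient residue class to streamline the local analysis.

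The main difficulty here is not arithmetic depth, since everything rests on the Bateman--Horn Conjecture, but bookkeeping: one must (i) confirm that the sparse ``square-part'' exceptions are genuinely negligible at the scale $\sqrt z/\log z$, and (ii) pin down $\mathfrak{S}_0$ exactly, which means getting every local factor correct --- including the ramified primes $2$ and $3$ --- and verifying absolute convergence of the Euler product so that the error terms produced by Bateman--Horn assemble into the asserted $o\!\left(\sqrt z/\log z\right)$. \qed
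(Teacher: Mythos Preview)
Your approach has a genuine gap that loses exactly one power of $\log z$. Restricting to the ``generic'' range $x<c\sqrt{z}$ where the square part $m$ of $\Delta(x)$ is bounded, Bateman--Horn for the pair $(q,r)$ yields only
\[
\#\{x<c\sqrt{z}:q(x),\,r(x)\text{ prime}\}\ \sim\ C\,\frac{c\sqrt{z}}{(\log\sqrt{z})^2}\ =\ O\!\left(\frac{\sqrt{z}}{(\log z)^2}\right),
\]
which is $o(\sqrt{z}/\log z)$. So even granting your claim that the ``exceptional'' $x$ contribute $o(\sqrt{z}/\log z)$, the generic term you keep is too small to give the stated bound. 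In fact the exceptional $x$ (those with large square part) are \emph{not} negligible; they carry the main term.

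The argument in~\cite{ULS2012}, which the paper reproduces in generalized form, does the opposite of what you propose: it \emph{stratifies} by the square part $m$. For each $m\le M$ one works in the progressions $n\equiv\beta\pmod{m^2}$, where $\beta$ runs over the roots of $\Delta$ modulo $m^2$, and asks that $q(n),r(n)$ be prime and $\Delta(n)/m^2\le z$ be squarefree. The last condition now allows $n$ up to size $\asymp m\sqrt{z}$, so Bateman--Horn contributes $\asymp \rho(m)\sqrt{z}/\bigl(m(\log(zm^2))^{2}\bigr)$ for each $m$. One then sums over $m\le M=z^{1/2}$: since $\Delta$ has on average one root modulo a prime, the multiplicative weight $f(m)$ built from the $\rho(p)$ satisfies $\sum_{m\le M}\mu(m)^2 f(m)/m\sim\mathfrak{S}_2\log M$, and partial summation against $(\log(zm^2))^{-2}$ recovers one factor of $\log z$, producing $\mathfrak{S}_0\sqrt{z}/\log z$. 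Your sketch discards precisely this sum over $m$, which is where the extra $\log z$ comes from.
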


\subsection{Assumptions}
Let $D$ denote the CM discriminant. We first rewrite Eq.~\eqref{eq:pell} in the following form:
\begin{equation}
\Delta (x)=Dum^2,
\label{eq:delta}
\end{equation}

\noindent where $\Delta (x) = (w_0x+w_1)^2 + w_2$, $w_0 = a(4h-d)$, $w_1 = b(4h - d) – 2(\varepsilon h - d)$, $w_2 = 4h(4 - \varepsilon) (\varepsilon h - d)$, $u = d(4h-d)$, and parameters $a, b, h, d$ and $\epsilon$ are defined as in Section~\ref{sec:SolvingPellEquations}. Note that these parameters were straightforwardly deduced from Eq.~\eqref{eq:pell}. 

In order to fulfil the conditions of the \textit{generalized Bateman-Horn conjecture} given in \cite[Section 3.4]{ULS2012}, 
we assume that the products $r(n)q(n)\Delta (n)$, $n\in\mathbb{Z}$ have no fixed prime divisor. 
We have the following lemma whose proof is straightforward.

\begin{lemma}\label{ConditionsDelta}
Under the above assumptions, for any prime $p$ and any integer $\beta$ such that $\Delta (\beta )=0\mod p^2$, we have 
 $\Delta' (\beta )\neq 0\mod p$.
\end{lemma}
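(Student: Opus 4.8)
The plan is to prove the contrapositive via a discriminant argument. Suppose for some prime $p$ and some integer $\beta$ we had both $\Delta(\beta)\equiv 0 \pmod{p^2}$ and $\Delta'(\beta)\equiv 0 \pmod p$. Writing $\Delta(x) = (w_0x+w_1)^2 + w_2$, we have $\Delta'(x) = 2w_0(w_0x+w_1)$, so $\Delta'(\beta)\equiv 0 \pmod p$ forces either $p \mid 2w_0$ or $p \mid (w_0\beta + w_1)$. In the latter case, $(w_0\beta+w_1)^2 \equiv 0 \pmod{p^2}$, and combining with $\Delta(\beta)\equiv 0 \pmod {p^2}$ gives $p^2 \mid w_2$; feeding this back shows $p$ is then a fixed prime divisor of $\Delta(n)$ for all $n$ in the relevant residue class, which I want to rule out. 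So the first step is to show that the no-fixed-prime-divisor assumption on the product $r(n)q(n)\Delta(n)$ already excludes such a coincidence.

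The key steps I would carry out, in order: (1) Compute $\Delta'$ explicitly and record that a common root of $\Delta$ and $\Delta' \bmod p$ (with appropriate multiplicity mod $p^2$) is exactly a statement that $p^2$ divides the ``content-type'' quantity $w_2$ together with $p \mid 2w_0$, since $\Delta(x) - \tfrac{1}{4w_0^2}\Delta'(x)^2 \cdot (\text{unit}) $ collapses to the constant $w_2$ once $w_0$ is invertible. Concretely, if $p \nmid 2w_0$ then $p\mid (w_0\beta+w_1)$, hence $p^2 \mid (w_0\beta+w_1)^2 = \Delta(\beta) - w_2$, hence $p^2 \mid w_2$; but then $\Delta(n) \equiv (w_0 n + w_1)^2 \pmod{p^2}$ is a perfect square mod $p^2$ for all $n$, and on the residue class $n \equiv \beta \pmod p$ one gets $p^2 \mid \Delta(n)$ — in particular $p \mid \Delta(n)$ for all such $n$, contradicting the assumption that $r(n)q(n)\Delta(n)$ has no fixed prime divisor (one checks $p \mid \Delta(n)$ on a full residue class is enough to make $p$ a fixed divisor since $\Delta$ is quadratic with leading term $w_0^2$ and $p\mid w_0$ would already give it). (2) Handle the remaining case $p \mid 2w_0$, i.e. $p = 2$ or $p \mid w_0 = a(4h-d)$: here $\Delta(n) \equiv w_1^2 + w_2 \pmod p$ is constant in $n$, so if additionally $\Delta(\beta)\equiv 0$ then $p \mid \Delta(n)$ for every $n$, again contradicting the no-fixed-divisor hypothesis. (3) Conclude that no such $p, \beta$ exist, which is precisely the statement of the lemma.

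The main obstacle — really the only nontrivial point — is making sure the no-fixed-prime-divisor assumption is being applied correctly: the hypothesis is about $r(n)q(n)\Delta(n)$, not $\Delta(n)$ alone, so I should either argue that $p$ dividing $\Delta(n)$ on a full residue class mod $p$ still makes $p$ a fixed divisor of the product (trivially true, since it divides one factor for every $n$ when the class is all of $\mathbb{Z}/p\mathbb{Z}$), being careful that in case (1) I only got $p^2 \mid \Delta(n)$ on the class $n\equiv\beta$, which for $p$ itself still gives divisibility by $p$ on that class — and then I need that divisibility on a single nonzero residue class, combined with the quadratic shape of $\Delta$, propagates; in fact the cleanest route is to note $\Delta(n) \bmod p$ is a quadratic (or constant) polynomial in $n$, and I will have exhibited it vanishing either identically (cases with $p\mid 2w_0$) or with a double root at $\beta$ together with $p^2\mid w_2$, and in the double-root situation $\Delta(n)\equiv w_0^2(n-\beta)^2\pmod{p}$ vanishes on the single class $n\equiv\beta$, which already certifies $p$ as a fixed prime divisor of the product. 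Everything else is the routine substitution of the formulas for $w_0, w_1, w_2$ and is not needed for the logical skeleton.
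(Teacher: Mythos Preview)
Your overall plan coincides with the paper's: split on whether $p$ divides the leading coefficient (equivalently $2w_0$), and in the non-dividing case reduce via $\Delta(\beta)\equiv 0\pmod{p^2}$, $\Delta'(\beta)\equiv 0\pmod p$ to $p^2\mid w_2$ (the paper writes this as $p^2\mid\mathrm{Disc}(\Delta)=-4w_0^2w_2$, using the identity $(\Delta')^2=4w_0^2\Delta+\mathrm{Disc}(\Delta)$). Your treatment of the case $p\mid 2w_0$ is correct and matches the paper's: $\Delta$ is then constant modulo $p$, so a single root modulo $p$ forces $p\mid\Delta(n)$ for all $n$, contradicting the no-fixed-divisor hypothesis. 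The paper additionally disposes of $p=2$ by noting $\Delta$ has no root modulo~$2$.

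There is, however, a genuine gap in your case $p\nmid 2w_0$. After obtaining $p^2\mid w_2$ you write that $\Delta(n)\equiv w_0^2(n-\beta)^2\pmod p$ vanishes on the single residue class $n\equiv\beta\pmod p$ and claim this ``already certifies $p$ as a fixed prime divisor of the product''. It does not: a fixed prime divisor of $r(n)q(n)\Delta(n)$ must divide the value at \emph{every} integer $n$, whereas you have only shown $p\mid\Delta(n)$ on one residue class. For $n\not\equiv\beta\pmod p$ you have said nothing about $r(n)$ or $q(n)$, so the no-fixed-divisor hypothesis is not violated by what you have established. To be fair, the paper's own argument is equally laconic at this point---it stops at ``then $p^2$ must divide $\mathrm{Disc}(\Delta(x))$'' without spelling out the contradiction---so this is not a strategic divergence but a step that is underspecified in both accounts and that you should not regard as closed by the reasoning you have given.
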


%The proof 
\begin{proof}
If $p$ is an odd prime that doesn’t divide the leading coefficient of $\Delta (x)$, then the equation $\Delta (x) = 0 \mod p^2$ is equivalent to $(\Delta' (x))^2-\mbox{Disc}(\Delta (x))=0\mod p^2$, and we see that if this equation has a solution $\beta$ such that  $\Delta' (\beta )= 0\mod p$, then $p^2$ must divide $\mbox{Disc}(\Delta (x))$.
 
If $p$ is an odd prime dividing the leading coefficient of $\Delta (x)$.
Under these conditions, it is easy to check that if $\Delta (x)$ has a root modulo $p$, then either $\Delta (x)$ is identically zero modulo $p$ (which is excluded, since $\Delta (x)$ is irreducible over $\mathbb{Z}[x]$), or $\Delta' (x)$ has no root modulo $p$.
 
Finally, we can easily check that $\Delta (x)$ has no root modulo $2$. \qed
\end{proof}

\noindent Similar to Jim\'enez Urroz {\em et al.}'s analysis, we proceed in two steps: 
\begin{itemize}
\item in the first step, let $m$ be an fixed integer, $B(m)$ be the set of residues module $m^2$, and let $\beta \in B(m)$. We seek to estimate $E_{m,\beta}(z)$, the number of positive integers of the form $n = \beta + km^2$ that satisfies the following conditions:
\begin{itemize}
 \item[ (1)] $q(n)$ is prime,
 \item[ (2)] $r(n)$ is prime,
 \item[ (3)] $\Delta (n)/m^2$ is a square-free integer $\leq z$, for some positive integer $m$.\\
\end{itemize}

\item the second step will give a lower bound of the sum of all the $E_{m,\beta}(z)$. 
\end{itemize}

\subsection{Preparations} 
If $n = \beta + km^2$ (that is, $n \equiv \beta \mod m^2$) satisfying the above conditions (1)--(3), then from Eq.~\eqref{eq:delta}, the class of $n$ modulo $m^2$ is an element of the set $B(m)$ of solutions of the following equation:
\begin{eqnarray}\label{CM_mod_m2}
\Delta(n) = 0 \mod m^2
\end{eqnarray}
From the condition $D \leq z$, we get 
$$w_0^2k^2m^4\leq \Delta (n)\leq zum^2$$
so $k \leq \sqrt{zu}/(w_0m)$. If the conditions (1)--(3) were independent events, the number of positive integers of the form 
$n = \beta + km^2$ satisfying these conditions would behave like 
$$F_{m,\beta}(z)=\frac{\sqrt{zu}}{\zeta (2)w_0m(\log (zm^2))^2}.$$

However, these events are actually not  independent, so the estimate $F_{m,\beta}(z)$ needs to be corrected by a constant which takes in account the local behaviors of $q(n)$, $r(n)$, $\Delta (n)/m^2$. This gives:

\begin{eqnarray*}
E_{m,\beta}(z) & \sim & \prod_p\left(1-\frac{N_{m,p}}{p^2}\right)\left(1-\frac{1}{p}\right)^{-2}\frac{\sqrt{zu}}{w_0m(\log (zm^2))^2},
\end{eqnarray*}
where $N_{m,p}$ is the number of solutions in the arithmetic progression $n=\beta\mod m^2$ of the congruence 

\begin{eqnarray}\label{BH/m2_mod_p2}
\left(q(n)r(n)\right)^2\frac{\Delta (n)}{m^2}=0\mod p^2.
\end{eqnarray}

Let $C_p$ be the number of solutions of the congruence:

\begin{eqnarray}\label{BH_mod_p2}
\left(q(n)r(n)\right)^2\Delta (n)=0\mod p^2.
\end{eqnarray}

\begin{description}

\item[1. ] If $p\nmid m$, then we see that $N_{m,p}=C_p$. The $C_p$'s can be computed by using the fact that almost all primes $p$ divide at most one of $q(n)$, $r(n)$ and $\Delta (n)$, and for these primes, we can get the solutions of (\ref{BH_mod_p2}) by counting separately the roots of $q(n)$, $r(n)$ and $\Delta (n)$ modulo $p$. In particular, we have $C_p=O(p)$. 
The remaining primes should be treated ``by hand''
(such primes must divide $2u$ or the resultant of two polynomials in $\{ q(n), r(n), \Delta (n)\}$). \\

\item[2. ] If $p\mid m$, then we have two possibilities: 
\begin{itemize}
\item[(i)] either $p\mid q(\beta)$ (resp. $p\mid r(\beta)$) and then $q(\beta +km^2)\in\mathbb{Z}[k]$ 
(resp. $r(\beta +km^2)$) does not take any prime value (this can happen only for a finite number of primes, namely, the primes which
divide $\mbox{Res}\left(q(x), \Delta (x)\right)\mbox{Res}\left(r(x), \Delta (x)\right)$);
\item[(ii)] or $p\nmid q(\beta )r(\beta )$, so 
$p\nmid q(n)r(n)$ for any $n=\beta +km^2$ and the Hansel's Lemma and Lemma~\ref{ConditionsDelta} ensure that there exists an unique solution modulo $p^2$ to the equation 
$$\frac{\Delta (\beta +km^2)}{m^2}=0\mod p^2,$$
and therefore, we have $N_{m,p}=1$.

\end{itemize}

\end{description}

\subsection{Lower bound on near prime-order MNT curves}

Let $\beta\in B'_m=\left\{\beta\in B_m\:\vert\: \forall p \mbox{ dividing } m,\: q(\beta )r(\beta )\neq 0\mod p\right\}$, and define $\rho (m) = \# B'_m$. The following lemma gives some basic properties of the function $\rho$:

\begin{lemma}\label{ProprietesRho}
Let $\beta\in B'_m=\left\{\beta\in B_m\:\vert\: \forall p \mbox{ dividing } m,\: q(\beta )r(\beta )\neq 0\mod p\right\} $. The function $\rho (m)= \# B'_m$ verifies that:
 \begin{enumerate}
  \item The function $\rho$ is multiplicative.
  \item For almost all primes, we have $\rho (p^e)=\rho (p)$, $\forall e\in \mathbb{N}^*$.
  \item  Let $w'_2$ be the product of the odd prime divisors of the non-square part of $w_2$ and $\ell =2\varphi (w'_2)$. Then there exist exactly $\ell$ classes
         $c_1,\dots ,c_\ell$ modulo $8w'_2$ (which can be explicitly computed) such that for almost all primes, we have
         $$\rho (p)=
         \left\{
         \begin{array}{rl}
         2 & \mbox{if }p=c_i\mod 8w'_2,\mbox{ for some }i\in\{1,\dots ,\ell\},\\
         0 & \mbox{otherwise.}
         \end{array}
         \right.
         $$
 \end{enumerate}
\end{lemma}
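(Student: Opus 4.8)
The plan is to prove the three assertions in turn. Assertion~(1) is a consequence of the Chinese Remainder Theorem, assertion~(2) of Hensel's lemma combined with Lemma~\ref{ConditionsDelta}, and assertion~(3) of an explicit count of the roots of $\Delta$ modulo $p$ via quadratic residues and quadratic reciprocity. For~(1), let $m=m_1m_2$ with $\gcd(m_1,m_2)=1$; then $\gcd(m_1^2,m_2^2)=1$, and the isomorphism $\mathbb{Z}/m^2\mathbb{Z}\cong\mathbb{Z}/m_1^2\mathbb{Z}\times\mathbb{Z}/m_2^2\mathbb{Z}$ restricts to a bijection $B_m\leftrightarrow B_{m_1}\times B_{m_2}$, because $\Delta(\beta)\equiv 0\bmod m^2$ exactly when $\Delta(\beta)\equiv 0$ modulo both $m_1^2$ and $m_2^2$. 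The primes dividing $m$ split into those dividing $m_1$ and those dividing $m_2$, and for $p\mid m_i$ the residue $q(\beta)r(\beta)\bmod p$ depends only on the image of $\beta$ in $\mathbb{Z}/m_i^2\mathbb{Z}$; hence the bijection restricts further to $B'_m\leftrightarrow B'_{m_1}\times B'_{m_2}$, so $\rho(m)=\rho(m_1)\rho(m_2)$.

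For~(2), by multiplicativity it suffices to treat $m=p^e$. Exclude the finite set of primes $p$ that are even, or divide the leading coefficient of $\Delta$, or divide $\mathrm{Res}(\Delta,q)\,\mathrm{Res}(\Delta,r)$. For such a $p$, every solution $\beta$ of $\Delta(\beta)\equiv 0\bmod p^2$ satisfies $\Delta'(\beta)\not\equiv 0\bmod p$ by Lemma~\ref{ConditionsDelta}, so each root of $\Delta$ modulo $p$ lifts uniquely to a root modulo $p^{2e}$ for every $e\geq 1$ by Hensel's lemma; thus $\#B_{p^e}$ equals the number of roots of $\Delta$ in $\mathbb{Z}/p\mathbb{Z}$, independently of $e$. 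Moreover none of these roots is a root of $q$ or of $r$ modulo $p$ (since $p$ divides neither resultant), so the defining condition of $B'_{p^e}$ is vacuous and $\rho(p^e)=\#B_{p^e}=\rho(p)$.

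For~(3), keep discarding the bad primes of step~(2) and also those dividing $w_2$. For a surviving prime $p$ we have $p\nmid w_0$, so the substitution $z=w_0x+w_1$ converts $\Delta(x)\equiv 0\bmod p$ into $z^2\equiv -w_2\bmod p$, which has $1+\left(\tfrac{-w_2}{p}\right)$ solutions; together with step~(2) this gives $\rho(p)=1+\left(\tfrac{-w_2}{p}\right)\in\{0,2\}$. Write the squarefree part of $w_2$ as $\pm 2^{\varepsilon_0}w'_2$ with $\varepsilon_0\in\{0,1\}$ and $w'_2$ the product of its odd prime divisors; then $\left(\tfrac{-w_2}{p}\right)=\left(\tfrac{\mp 2^{\varepsilon_0}w'_2}{p}\right)$, and the supplementary laws for $\left(\tfrac{-1}{p}\right)$ and $\left(\tfrac{2}{p}\right)$ together with quadratic reciprocity for $\left(\tfrac{w'_2}{p}\right)$ show that this symbol depends only on $p\bmod 8w'_2$, defining a quadratic character $\chi$ of $(\mathbb{Z}/8w'_2\mathbb{Z})^{\times}$. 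When $-w_2$ is not a perfect square $\chi$ is nontrivial, so $\#\ker\chi=\tfrac12\varphi(8w'_2)=\tfrac12\cdot 4\varphi(w'_2)=2\varphi(w'_2)=\ell$; taking $c_1,\dots,c_\ell$ to be the classes of $\ker\chi$ gives the stated dichotomy for $\rho(p)$.

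The main obstacle is the bookkeeping in step~(3): one has to track the sign and the $2$-adic valuation of the squarefree part of $w_2$ to confirm that the correct modulus is exactly $8w'_2$ and that the number of favorable classes is exactly $\ell=2\varphi(w'_2)$, and one must separately dispose of the degenerate case in which $-w_2$ is a perfect square (where $\rho(p)=2$ for almost all $p$). By contrast, identifying and then discarding the finite exceptional sets of primes in steps~(2) and~(3) — primes dividing leading coefficients, discriminants, resultants, or $w_2$ — is routine.
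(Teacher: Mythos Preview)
Your proof is correct and follows essentially the same approach as the paper's own argument---Chinese Remainder Theorem for~(1), Hensel's lemma for~(2), and quadratic reciprocity for~(3)---though you have fleshed out the details considerably more than the paper's terse sketch. In particular, your explicit bookkeeping of the exceptional primes (those dividing leading coefficients, resultants, or $w_2$) and the verification that $\#\ker\chi = \tfrac{1}{2}\varphi(8w'_2) = 2\varphi(w'_2)$ make rigorous what the paper leaves implicit.
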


\begin{proof}
 Point (1) follows directly from the Chinese Remainder Theorem. For almost all primes $p$ and all $e\in \mathbb{N}$, $\rho (p^e)$ is just the number of solutions to (\ref{CM_mod_m2}) with $m=p^e$, so point (2) follows from the Hansel Lemma. As for point (3), notice that for almost all primes $p$, the number of solutions to $\Delta(x)=0\mod p$ is $2$ if $-w_2$ is a square modulo $p$, and $0$ otherwise. We conclude the proof by using the Law of Quadratic Reciprocity.
 \qed 
\end{proof}

Let $E(z)$ be the number of near prime order MNT curves having CM discriminant $D < z$ and let $M$ be an integer. We have

\begin{gather}
\begin{aligned}
E(z) \ge \sum_{m\leq M}\sum_{\beta\in B_m}E_{m,\beta}(z) & \ge \mathfrak{S}_1\frac{\sqrt{zu}}{w_0}\sum_{m\leq M}\frac{f(m)}{m(\log (zm^2))^2} \end{aligned}
\end{gather}

\noindent where
$$\mathfrak{S}_1=\prod_{p}\left(1-\frac{C_{p}}{p^2}\right)\left(1-\frac{1}{p}\right)^{-2},$$
\noindent and 
$$f(m)=\prod_{p\mid m}\left(1-\frac{1}{p}\right)^2\left(1-\frac{C_{p}}{p^2}\right)^{-1}\rho (p),$$
With the notation of Lemma~\ref{ProprietesRho}, we have
\begin{eqnarray}\label{Proprietef(p)}
 f(p)=
 \left\{
 \begin{array}{rl}
  2+O(1/p) & \mbox{if }p=c_i\mod 8w'_2,\mbox{ for some }i\in\{1,\dots ,\ell\},\\
  0 & \mbox{otherwise.}
 \end{array}
 \right.
\end{eqnarray}
Now, let 
$$S(t)=\sum_{m\leq t}\mu (m)^2\frac{f(m)}{m},$$
where $\mu$ is the M\"obius function. By using (\ref{Proprietef(p)}), we can follow exactly the method given at the beginning of the proof of \cite{ULS2012}, Theorem 8.
This gives
\begin{eqnarray}\label{EstimationS(T)}
S(t)=\mathfrak{S}_2\log (t)+O(1),
\end{eqnarray}
where 
\begin{eqnarray}\label{S2}
\mathfrak{S}_2=\prod_{p}\left(1-\frac{1}{p}\right)\left(1+\frac{f(p)}{p}\right)
\end{eqnarray}
Still following the proof of \cite{ULS2012}, Theorem 8, we use the partial summation (see \cite{IK2004}, Section 1.5) and Equation~\ref{EstimationS(T)} we found that

\begin{eqnarray}\label{MinorationFinale}
\sum_{m\leq M}\mu (m)^2\frac{f(m)}{m(\log (zm^2))^2}\geq \frac{\mathfrak{S}_2}{2}\left( \frac{-1}{\log (zM^2)}+\frac{1}{\log (z)}\right) +O\left(\frac{1}{\log (z)^2}\right) \\ \nonumber 
\end{eqnarray}
Taking $M=z^{1/2}$ in (\ref{MinorationFinale}), we get the following theorem. %

\begin{theorem}

Let $E(z)$ be the number of curves in the family with parameters $(t,r,q)$ polynomials in $x$,  an embedding degree $k$ and a co-factor $h$ having discriminant $D$ less than $z$. Given:
\begin{eqnarray*}
\mathfrak{S}_1=\prod_{p}\left(1-\frac{C_{p}}{p^2}\right)\left(1-\frac{1}{p}\right)^{-2},
\end{eqnarray*}
 and 
 \begin{eqnarray*}\label{S2}
\mathfrak{S}_2=\prod_{p}\left(1-\frac{1}{p}\right)\left(1+\frac{f(p)}{p}\right).
\end{eqnarray*}
 
\noindent Then, the lower bound 
 $$E(z)\geq \left(\mathfrak{S}_0+o(1)\right)\frac{\sqrt{z}}{\log (z)},$$
holds as $z \rightarrow \infty$, where $\mathfrak{S}_0=\frac{\sqrt{u}}{4w_0}\mathfrak{S}_1\mathfrak{S}_2$.
\end{theorem}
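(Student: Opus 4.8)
The plan is to follow the structure of the proof of \cite[Theorem 8]{ULS2012} line by line, replacing the specific polynomials of the $k=6$, $h=1$ MNT case by the general parametrised triple $(t,r,q)$ together with the auxiliary quadratic $\Delta$. First I would set up the sieve: fix $m\leq M$ and $\beta\in B_m$, and count integers $n\equiv\beta\bmod m^2$ for which $q(n)$, $r(n)$ are simultaneously prime and $\Delta(n)/m^2$ is square-free and bounded by $z$. The generalized Bateman--Horn conjecture (whose applicability is exactly what the no-fixed-prime-divisor assumption of Section~5.1 and Lemma~\ref{ConditionsDelta} are there to guarantee) yields the asymptotic
\begin{equation*}
 E_{m,\beta}(z)\sim \prod_p\left(1-\frac{N_{m,p}}{p^2}\right)\left(1-\frac1p\right)^{-2}\frac{\sqrt{zu}}{w_0 m(\log(zm^2))^2},
\end{equation*}
and the two-case analysis of $N_{m,p}$ (the $p\nmid m$ case giving $N_{m,p}=C_p$, the $p\mid m$ case giving $N_{m,p}=1$ via Hensel and Lemma~\ref{ConditionsDelta}) is precisely the content already spelled out in Section~5.2. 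This lets me factor the Euler product into the $m$-independent constant $\mathfrak{S}_1$ times the arithmetic function $f(m)$.

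Next I would sum over $\beta\in B_m$, which introduces the factor $\rho(m)=\#B'_m$, and then sum over $m\leq M$ restricted to square-free $m$ (the square-free restriction is harmless for a lower bound and is natural since $\Delta(n)/m^2$ must be square-free). Using Lemma~\ref{ProprietesRho} — multiplicativity of $\rho$, stability $\rho(p^e)=\rho(p)$ for almost all $p$, and the explicit $\rho(p)\in\{0,2\}$ governed by congruence classes mod $8w'_2$ via quadratic reciprocity — I get that $f(p)=2+O(1/p)$ on an arithmetic progression of primes and $0$ otherwise, exactly as in \eqref{Proprietef(p)}. From here the estimate
\begin{equation*}
 S(t)=\sum_{m\leq t}\mu(m)^2\frac{f(m)}{m}=\mathfrak{S}_2\log t+O(1)
\end{equation*}
follows from a standard Mertens-type argument (the hypotheses on $f(p)$ are exactly those under which $\prod_p(1-1/p)(1+f(p)/p)$ converges and controls the partial sums), and then partial summation against the slowly varying weight $1/(\log(zm^2))^2$ gives \eqref{MinorationFinale}. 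Choosing $M=z^{1/2}$ collapses the bracket $\bigl(-1/\log(zM^2)+1/\log z\bigr)$ to $\bigl(-1/(2\log z)+1/\log z\bigr)=1/(2\log z)$ up to the claimed error, and bookkeeping the constants produces $E(z)\geq(\mathfrak{S}_0+o(1))\sqrt{z}/\log z$ with $\mathfrak{S}_0=\frac{\sqrt u}{4w_0}\mathfrak{S}_1\mathfrak{S}_2$.

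The main obstacle is not any single inequality but making sure the \cite{ULS2012} machinery transfers without a hidden obstruction: one must verify that for the general triple $(t,r,q,\Delta)$ — not just the single classical MNT polynomial — the product $r(n)q(n)\Delta(n)$ genuinely has no fixed prime divisor (this is taken as an assumption, but it is the crux of whether Bateman--Horn applies at all), that the ``bad primes'' dividing $2u$ or the pairwise resultants $\mathrm{Res}(q,r)$, $\mathrm{Res}(q,\Delta)$, $\mathrm{Res}(r,\Delta)$ are finite in number and contribute only a convergent correction to $\mathfrak{S}_1$, and that $-w_2$ is not a perfect square (so that $\rho(p)$ is not identically $2$, which would change the order of growth). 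Handling these finitely many degenerate primes ``by hand'', as the text says, is the delicate part; everything downstream is a faithful repetition of the argument in \cite{ULS2012}.
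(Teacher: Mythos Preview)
Your proposal is correct and follows essentially the same route as the paper: the paper's proof is precisely the build-up of Sections~5.1--5.3 (the Bateman--Horn estimate for $E_{m,\beta}(z)$, the $p\mid m$ versus $p\nmid m$ dichotomy for $N_{m,p}$, the factorisation into $\mathfrak{S}_1 f(m)$, the estimate $S(t)=\mathfrak{S}_2\log t+O(1)$ via Lemma~\ref{ProprietesRho}, partial summation, and the choice $M=z^{1/2}$), all explicitly modelled on \cite[Theorem~8]{ULS2012}. Your final paragraph on the genuine obstacles (the no-fixed-prime-divisor hypothesis, the finitely many bad primes dividing $2u$ or the pairwise resultants, and the non-square condition on $-w_2$) is an accurate and useful diagnosis of where care is needed; the paper itself assumes the first point and leaves the second to be ``treated by hand''.
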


\section{Conclusion}
\label{sec:Conclusion}
In this paper, we first extended Scott-Barreto's method and presented an explicit and efficient algorithm that is able to generate all families of the near prime-order MNT curves, given an embedding degree $k$ and a cofactor $h$. Furthermore, we provided explicit formulas for the number of these families. Then, we analyzed the generalized Pell equations of these curves. Finally, we gave statistics of the near prime-order MNT curves.

\bibliographystyle{abbrv}
\bibliography{xMNT}

\end{document}